\title{An Improved Random Shift Algorithm for Spanners and Low Diameter Decompositions\footnote{To be presented at the 25th International Conference on Principles of Distributed Systems (OPODIS 2021)}} 
\author{Sebastian Forster, Martin Gr\"osbacher, Tijn de Vos}
\date{University of Salzburg, Austria\\[2ex] \today}
\let \P\relax
\DeclareMathOperator{\P}{\mathbb{P}}
\DeclareMathOperator{\ID}{\operatorname{ID}}
\newtheorem{theorem}{Theorem}
\newtheorem{lemma}[theorem]{Lemma}
\newtheorem{corollary}[theorem]{Corollary}
\newtheorem{definition}[theorem]{Definition}
\renewcommand*{\@fnsymbol}[1]{\ensuremath{\ifcase#1\or *\or \dagger\or \ddagger\or
   \mathsection\or \mathparagraph\or \|\or **\or \dagger\dagger
   \or \ddagger\ddagger \else\@ctrerr\fi}}
\begin{document}
\maketitle

\begin{abstract}
Spanners have been shown to be a powerful tool in graph algorithms. Many spanner constructions use a certain type of clustering at their core, where each cluster has small diameter and there are relatively few spanner edges between clusters. In this paper, we provide a clustering algorithm that, given $k\geq 2$, can be used to compute a spanner of stretch $2k-1$ and expected size $O(n^{1+1/k})$ in $k$ rounds in the CONGEST model. This improves upon the state of the art (by Elkin, and Neiman [TALG'19]) by making the bounds on both running time and stretch independent of the random choices of the algorithm, whereas they only hold with high probability in previous results. Spanners are used in certain synchronizers, thus our improvement directly carries over to such synchronizers. Furthermore, for keeping the \emph{total} number of inter-cluster edges small in low diameter decompositions, our clustering algorithm provides the following guarantees. Given $\beta\in (0,1]$, we compute a low diameter decomposition with diameter bound $O\left(\frac{\log n}{\beta}\right)$ such that each edge $e\in E$ is an inter-cluster edge with probability at most $\beta\cdot w(e)$ in $O\left(\frac{\log n}{\beta}\right)$ rounds in the CONGEST model. Again, this improves upon the state of the art (by Miller, Peng, and Xu [SPAA'13]) by making the bounds on both running time and diameter independent of the random choices of the algorithm, whereas they only hold with high probability in previous results.
\end{abstract}

\noindent

\noindent
\textbf{Acknowledgements:} Supported by the Austrian Science Fund (FWF): P 32863-N\\
The first author would like to thank Merav Parter for mentioning that there was still some room for improvement in the spanner construction.

\section{Introduction}
\label{sec:intro}
Clustering has become an essential tool in dealing with large data sets. The goal of clustering data is to identify disjoint, dense regions such that the space between them is sparse. When working with graphs, this translates to partitioning the vertex set into clusters with relatively few edges between clusters such that the clusters satisfy a particular property. One can for example demand that the clusters have low diameter \cite{Awerbuch85,AKPW95,Bartal96,MPX13}, high conductance \cite{GR99,KVV04,ST11,CHKM12,CHZ19,SW19,CS20}, or low effective resistance diameter \cite{AALG18}. In this paper, we focus on the low diameter decomposition and its connection to spanners. Low diameter decompositions are formally defined as follows. 

\begin{definition}
Let $G=(V,E)$ be a weighted graph. A \emph{probabilistic $(\beta,\delta)$-low diameter decomposition} of $G$ is a partition of the vertex set $V$ into subsets $V_1, \dots, V_l$, called \emph{clusters}, such that
\begin{itemize}
	\item each cluster $V_i$ has \emph{strong diameter} at most $\delta$, i.e., $d_{G[V_i]}(u,v) \leq \delta$ for all $u,v\in V_i$\footnote{For $U\subseteq V$, we write $G[U]$ for the graph induced by $U$, i.e., $G[U]:=(U,U\times U \cap E)$.}; 
	\item the probability that an edge $e\in E$ is an \emph{inter-cluster edge} is at most $\beta \cdot w(e)$, i.e., for $e=(u,v)$, the probability that $u\in V_i$ and $v\in V_j$ for $i\neq j$ is at most $\beta \cdot w(u,v)$. 
\end{itemize}
\end{definition}

In an unweighted graph, another typical definition of the low diameter decomposition replaces the second condition with an upper bound on the number of inter-cluster edges \cite{MPX13}. In this fashion, a probabilistic low diameter decomposition has $O(\beta m)$ inter-cluster edges in expectation. 

Originally, low diameter decompositions were developed for distributed models, where they have been proven useful by reducing communication significantly in certain situations \cite{Awerbuch85,AGLP89}. Later, they also have shown to be fruitful in other models; they have been applied in shortest path approximations \cite{Cohen94}, cut sparsifiers \cite{LR99}, and tree embeddings with low stretch \cite{AKPW95,Bartal96,Bartal98}. 

The clustering technique used for computing low diameter decompositions has implicitly been used to develop sparse spanners \cite{BS07, MPVX15, EN19} and synchronizers \cite{Awerbuch85,APPS92}. The main idea is to create the clusters, and add some, but not all, of the inter-cluster edges. In a sense, the inter-cluster edges are sparsified. We formalize this concept as follows. 

\begin{definition}
Let $G=(V,E)$ be an unweighted graph. A \emph{sparsified $(\zeta,\delta)$-low diameter decomposition} of $G$ is a partition of the vertex set $V$ into subsets $V_1, \dots, V_l$, called \emph{clusters}, together with a set of edges $F\subseteq E$ such that
\begin{itemize}
	\item each cluster $V_i$ has \emph{strong diameter} at most $\delta$, i.e., $d_{G[V_i]}(u,v) \leq \delta$ for all $u,v\in V_i$; 
	\item for every edge, one of its endpoints has an edge from $F$ into the cluster of the other endpoint, i.e., $\forall e=(u,v) \in E$, we have either $(u',v)\in F$ for some $u' \in C_u$ or $(u,v')\in F$ for some $v' \in C_v$\footnote{For $v\in V$, we write $C_v$ for the cluster containing $v$.}. 
	\item  $|F|\leq \zeta$.
\end{itemize}
Moreover, we say that a sparsified $(\zeta,\delta)$-low diameter decomposition is \emph{tree-supported} if for each cluster $V_i$ we have a cluster center $c_i\in V_i$ and a tree of height at most $\delta/2$ spanning the cluster. All these trees together are called the \emph{support-forest}. 
\end{definition}

Our main result is a clustering algorithm that produces a sparsified low diameter decomposition. 

\begin{theorem}
\label{thm:sparsifiedLDD}
There exists an algorithm, such that for each unweighted graph $G=(V,E)$ and parameter $k\geq 2$ it outputs a tree-supported sparsified $\left(\zeta,2k-2\right)$-low diameter decomposition, with $\zeta=O(n^{1+1/k})$ in expectation. The algorithm runs in $k$ rounds in the CONGEST model, and in $O(k \log^*n)$ depth and $O(m)$ work in the PRAM model.
\end{theorem}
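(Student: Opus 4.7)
The plan is to adapt the Miller--Peng--Xu random shift paradigm so that the round complexity and the diameter bound are always satisfied (not merely with high probability), while preserving the expected spanner size analysis of prior work. Let each vertex $v$ draw an independent shift $\delta_v$ from an exponential distribution with rate $\beta=(\ln n)/(k-1)$, and then replace it by $\min(\delta_v,k-1)$. Conceptually, $v$ ``wakes up'' at time $k-1-\delta_v\in[0,k-1]$ and initiates a unit-speed BFS. Each vertex $u$ is claimed by the center whose BFS reaches it first, with ties broken by $\mathrm{ID}$; equivalently, $u$ joins the cluster of the $v$ that maximises $\delta_v-d_G(u,v)$. The edge set $F$ consists of the BFS-tree edges (forming the support forest) together with, for every vertex $u$ and every distinct cluster that ever made an offer to $u$, one representative edge from $u$ into that cluster.

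\textbf{Deterministic guarantees.} Since $u$ is claimed by its own BFS no later than time $k-1-\delta_u\le k-1$, the procedure always finishes within $k$ CONGEST rounds. If $u\in V_i$ with centre $v$, then $v$'s BFS reached $u$ by time $k-1$ along edges internal to $V_i$, so $d_{G[V_i]}(u,v)\le \delta_v\le k-1$. Hence the support forest has height at most $k-1$ and every cluster has strong diameter at most $2k-2$, giving the first requirement of a sparsified $(\zeta,2k-2)$-low diameter decomposition; the second (every edge's endpoint has an $F$-edge into the other endpoint's cluster) is immediate from how $F$ is constructed, and $|F|\le (n-1)+$ (sum of representative edges).

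\textbf{Expected size.} For a fixed $u$, list the remaining vertices as $v_1,v_2,\ldots$ in increasing order of $d_G(u,v_i)-\delta_{v_i}$. The number of $F$-edges incident to $u$ beyond tree edges is bounded by the number of ``record breakers'' in this sequence, i.e.\ indices $i$ for which $v_i$ strictly improves on $\min_{j<i}\bigl(d_G(u,v_j)-\delta_{v_j}\bigr)$. A standard memoryless-property computation for exponential shifts bounds the probability of the $i$-th record event by $O(1/i)$, so the expected count is a harmonic sum; calibrating $\beta=(\ln n)/(k-1)$ against the exponential tail keeps the expected number of distinct offering clusters at each vertex at $O(n^{1/k})$, yielding $\mathbb{E}[\zeta]=O(n^{1+1/k})$ after summing over $u$. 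The CONGEST and PRAM complexities follow from the standard implementation in which each vertex forwards its current best ``arrival time'' message at most once per round.

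\textbf{Main obstacle.} The delicate step is the expected-size analysis in the presence of the truncation $\delta_v\mapsto\min(\delta_v,k-1)$, which destroys the memoryless property that makes the clean MPX argument go through. I expect to handle this by noting that $\Pr[\delta_v>k-1]\le 1/n$ under the chosen $\beta$, so truncated shifts contribute only a lower-order additive term that is absorbed into the constant in $O(n^{1+1/k})$; the rest reduces to the harmonic-style counting used by Elkin--Neiman and its predecessors. A secondary subtlety is ensuring that ``offers'' in CONGEST can be represented by $O(\log n)$-bit messages so that each round of the truncated BFS respects the bandwidth constraint, which is achieved by transmitting only the best current (cluster~$\mathrm{ID}$, arrival time) pair at each vertex.
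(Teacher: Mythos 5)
Your overall architecture matches the paper's: truncated random shifts, a BFS from a virtual source with arrival time $k-1-\delta_v+d_G(v,u)$, ID tie-breaking, clusters as the resulting BFS-tree components, and $F$ built from ``offers''. The deterministic claims ($k$ rounds, radius $k-1$, diameter $2k-2$) are correct in substance, although the assertion that the winning center's BFS path to $u$ lies entirely inside $u$'s cluster is not automatic: it is exactly the content of the paper's tree-support lemma and requires an induction showing that every vertex on a shortest path from $c_u$ to $u$ is itself won by $c_u$, which in turn depends on the tie-breaking rule being consistent along the path. The paper also replaces the truncated exponential by a capped \emph{geometric} distribution; that change is largely cosmetic for the deterministic bounds, but it is what makes the size analysis clean.

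The genuine gap is in the expected-size argument. First, the set of clusters contributing an edge at $u$ is not the set of ``record breakers'' of the sequence $d_G(u,v_i)-\delta_{v_i}$: it is the set of centers whose offer distance is within one unit of the minimum (refined by the ID rule), which can be far larger or smaller than the number of records, and listing the $v_i$ in increasing order of that quantity makes every index a record, so the event you describe is not even well defined. Second, the $O(1/i)$ bound on the $i$-th record requires exchangeability of the sequence, which fails here because the offsets $d_G(u,v_i)$ differ across $i$; and even if it held, a harmonic sum gives $\Theta(\log n)$ independently of the rate $\beta$, so no ``calibration'' of $\beta$ can turn it into $n^{1/k}$ --- the claim is internally inconsistent. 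The argument that actually works (the paper's Lemma~\ref{lm:spannersize}, following Elkin--Neiman) is geometric, not harmonic: order the vertices by offer distance with ID tie-breaking, condition on the identity of the $t$-th vertex $u_t=v$ in this order, and use the memoryless property below the cap to show that each earlier $u_i$ lands in $C(x)$ with probability at most $p=1-n^{-1/k}$, whence $\P\bigl[|C(x)|\geq t\bigr]\leq p^{t-1}$ and $\mathbb{E}\bigl[|C(x)|\bigr]\leq 1/(1-p)=n^{1/k}$. The capped vertices must then be handled as a separate class (the paper's $V_2$): they are exactly the vertices with $\delta_u$ at the cap, their expected number is $n(1-p)^{k-1}=n^{1/k}$, and each contributes at most one edge to $C(x)$; your proposed fix via $\P[\delta_v>k-1]\leq 1/n$ does not interact correctly with the conditioning on the order statistics, which is why the paper routes this through the law of total probability over $u_t=v$ rather than a global union bound.
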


An important feature of this result is that the bounds on the strong diameter and number of rounds are not probabilistic; they are independent of the random choices in the algorithm. We show two applications of this theorem: constructing spanners and constructing synchronizers. 

\subparagraph*{Spanners}
Given a graph $G=(V,E)$, we say that $H\subseteq G$ is a \emph{spanner} of stretch $\alpha$, if $d_H(u,v)\leq \alpha\cdot d_G(u,v)$, for every $u,v\in V$. It is straightforward that a tree-supported sparsified $\left(\zeta,2k-2\right)$-low diameter decomposition gives a spanner of size $\zeta+n$ and stretch $2k-1$, for details we refer to Section~\ref{subsec:spanner}. This gives us the following corollary. 

\begin{corollary}\label{cor:spanner}
There exists an algorithm, such that for each unweighted graph $G=(V,E)$ and parameter $k\geq 2$ it outputs a spanner $H$ of stretch $2k-1$. The expected size of $H$ is at most $O(n^{1+1/k})$.  
The algorithm runs in $k$ rounds in the CONGEST model, and in $O\left(k \log^*n\right)$ depth and $O(m)$ work in the PRAM model.
\end{corollary}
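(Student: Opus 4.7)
The plan is to convert the output of Theorem~\ref{thm:sparsifiedLDD} into a spanner in the most direct way possible. Given a tree-supported sparsified $(\zeta,2k-2)$-low diameter decomposition with inter-cluster edge set $F$ and support-forest $T$, I would simply take $H := F \cup T$. The size bound is immediate: $T$ is a forest on $V$ so $|T|\leq n-1$, and $|F|\leq \zeta = O(n^{1+1/k})$ in expectation, giving $|H|=O(n^{1+1/k})$ in expectation.

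For the stretch, my first step is the standard reduction that it suffices to bound $d_H(u,v)\leq 2k-1$ for every edge $(u,v)\in E$: a shortest $u$-$v$ path in $G$ of length $L$ then yields a walk in $H$ of length at most $(2k-1)L$, so $d_H \leq (2k-1) d_G$ globally. For a single edge $(u,v)\in E$ I would split into two cases. If $u$ and $v$ lie in the same cluster $V_i$, then the support-tree for $V_i$ has height at most $(2k-2)/2 = k-1$, so the path $u \to c_i \to v$ through the cluster center $c_i$ has length at most $2(k-1)=2k-2$ in $T\subseteq H$. If $u$ and $v$ lie in distinct clusters, the second property of a sparsified LDD gives, without loss of generality, some $u'\in C_u$ with $(u',v)\in F$; concatenating the in-tree path from $u$ to $u'$ (of length at most $2k-2$) with the single edge $(u',v)\in F\subseteq H$ yields $d_H(u,v)\leq (2k-2)+1 = 2k-1$.

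The round, depth, and work bounds are inherited verbatim from Theorem~\ref{thm:sparsifiedLDD}, since the only additional operation is the trivial union $F\cup T$, which requires no further communication (each vertex simply keeps its incident edges of $F$ together with its edges in the support-forest). There is no genuine obstacle here; the only point to double-check is that the strong-diameter bound of $2k-2$ combined with a single crossing edge of $F$ produces exactly $2k-1$ and not $2k$, which is precisely why the theorem is calibrated with diameter $2k-2$ rather than $2k-1$.
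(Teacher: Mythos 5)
Your proposal is correct and takes essentially the same route as the paper: the paper also sets $H:=(V,T\cup F)$, reduces stretch to single edges, and charges $2k-2$ for the within-cluster detour plus $1$ for the crossing edge of $F$. Your version is marginally more careful in explicitly routing the within-cluster segment through the support tree (so the path demonstrably lies in $H$), where the paper cites the strong-diameter corollary; this is a presentational difference only.
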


Spanners themselves have been useful in computing approximate shortest paths \cite{ABCP93,Cohen98}, distance oracles and labeling schemes \cite{TZ05,Peleg00}, and routing \cite{PU89}. A simple greedy algorithm~\cite{AGDJS93} gives a spanner of stretch $2k-1$ and of size $O(n^{1+1/k})$, which is an optimal trade-off under the girth conjecture \cite{FS16}. However, its fastest known implementation in the RAM model takes $O(k n^{2+1/(2k+1)})$ time \cite{RZ04}. Halperin and Zwick \cite{HZ93} gave a linear-time algorithm to construct spanners with an optimal trade-off for unweighted graphs in the RAM model. However, this algorithm does not adapt well to distributed and parallel models of computation. 
This problem can be overcome by exploiting the aforementioned relation with sparsified low diameter decompositions. This was (implicitly) done by Baswana and Sen \cite{BS07}, who provide an algorithm that computes a spanner of stretch $2k-1$ and of size $O(kn^{1+1/k})$ in $O(k)$ rounds. The state of the art is by Elkin and Neiman \cite{EN19}, which builds off \cite{MPVX15}, and is also based on low diameter decompositions. They provide an algorithm that with probability $1-1/c$ computes a $(2k-1)$-spanner of expected size $O(c^{1/k}n^{1+1/k})$ in $k$ rounds. Standard techniques of boosting the failure probability to something inverse polynomial (or `with high probability') will require a logarithmic overhead. Alternatively, one can view the algorithm of Elkin and Neiman as an algorithm that outputs an $\alpha$-spanner of expected size $O(c^{1/k}n^{1+1/k})$ in $O(\alpha)$ rounds, such that with probability $1-1/c$ we have that $\alpha=2k-1$. 

Corollary~\ref{cor:spanner} improves on the result of Elkin and Neiman by making the bounds on the stretch and the running time independent of the random choices in the algorithm. In particular, the algorithm of Elkin-Neiman involves sampling vertex values from an exponential distribution. The exponential distribution introduces an (as we show) unnecessary amount of randomness; we demonstrate that the geometric distribution suffices. We replace the extra random bits the exponential distribution provides by a tie-breaking rule on the vertex $\ID$s, which we believe contributes to a more intuitive construction. 

\subparagraph*{Synchronizers}
The second application of Theorem~\ref{thm:sparsifiedLDD} is in constructing synchronizers in the CONGEST model. A synchronizer gives a procedure to run a synchronous algorithm on an asynchronous network. More precisely, the goal is to run any synchronous $R(n)$-round $M(n)$-message complexity CONGEST model algorithm on an asynchronous network with minimal time and message overhead. The first results on synchronizers are by Awerbuch \cite{Awerbuch85}, called synchronizers $\alpha$, $\beta$, and $\gamma$. Subsequently, these results were improved by Awerbuch and Peleg \cite{AP90}, and Awerbuch et al.\ \cite{APPS92}, both having $O(R(n)\log^3 n)$ time and $O(M(n)\log^3 n)$ message complexity. 

The synchronizer $\gamma$ from \cite{Awerbuch85} essentially consists of running a combination of the simple synchronizers $\alpha$ and $\beta$ on a sparsified $(\zeta,\delta)$-low diameter decompositions. In that case, the bound $\zeta$ on the sparsified inter-cluster edges goes into the bound for the communication overhead of the synchronizer and the strong diameter bound $\delta$ goes into the bound for the time overhead of the synchronizer. Applying synchronizer $\gamma$ on our clustering, we obtain the following result.

\begin{theorem}\label{thm:synchronizer}
There exists an algorithm that, given $k\geq 2$, can run any synchronous $R(n)$-round $M(n)$-message complexity CONGEST model algorithm on an asynchronous CONGEST network. In expectation, the algorithm uses a total of $O(M(n)+R(n)n^{1+1/k})$ messages. Provided that each message incurs a delay of at most one time unit, it takes $O(R(n)k)$ rounds. The initialization phase takes $O(k)$ time, using $O(km)$ messages.  
\end{theorem}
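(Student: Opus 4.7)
The plan is to combine the clustering from Theorem~\ref{thm:sparsifiedLDD} with Awerbuch's synchronizer $\gamma$. In the initialization phase, I would run the algorithm of Theorem~\ref{thm:sparsifiedLDD} with the given parameter $k$ to obtain a tree-supported sparsified $(\zeta,2k-2)$-low diameter decomposition together with its support forest $T$ and sparsifier edge set $F$, where $\zeta=O(n^{1+1/k})$ in expectation. Since this runs in $k$ CONGEST rounds and each round sends $O(1)$ messages across every edge, the initialization cost is $O(k)$ time and $O(km)$ messages, matching the claim.

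To simulate each synchronous round of the target algorithm, I would run synchronizer $\gamma$ on this structure. Intra-cluster, synchronizer $\beta$ uses the cluster's spanning tree (of height at most $k-1$) to convergecast ``safe'' acknowledgements to the cluster center and broadcast ``cluster-safe'' back; inter-cluster, synchronizer $\alpha$ is applied to the edges of $F$, exchanging one ``cluster-safe'' message per such edge. Because $F$ has the property that for every $e=(u,v)\in E$ at least one endpoint has an $F$-neighbor inside the cluster of the other, each vertex learns, by combining the $F$ messages with the in-cluster broadcast, when all clusters containing its $G$-neighbors are safe, and can then enter the next simulated round.

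For the cost analysis, each simulated round incurs $O(k)$ time on the cluster trees and $O(1)$ additional time for the exchange on $F$ (using the unit-delay assumption), so the total running time is $O(R(n)\cdot k)$. The per-round message cost is $O(|T|)=O(n)$ for synchronizer $\beta$, since the support forest has at most $n-1$ edges summed over all clusters, plus $O(|F|)=O(\zeta)$ for synchronizer $\alpha$, which is $O(n^{1+1/k})$ in expectation; adding the $M(n)$ messages sent by the simulated algorithm itself yields $O(M(n)+R(n)\,n^{1+1/k})$ in expectation, as required. The main subtlety is verifying that the sparsified-LDD covering property of $F$ is exactly what synchronizer $\gamma$ needs in place of a full inter-cluster edge set, so that safety signals transmitted only along $F$ still suffice for correctness; once this is observed, the time and message bounds follow directly by plugging the parameters of Theorem~\ref{thm:sparsifiedLDD} into the standard analysis of $\gamma$.
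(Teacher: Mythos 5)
Your proposal is correct and follows essentially the same route as the paper: the paper proves this theorem by instantiating a CONGEST implementation of Awerbuch's synchronizer $\gamma$ (Lemma~\ref{lm:synchronizer}, sketched in Appendix~\ref{app:synchronizer}) with the tree-supported sparsified $(\zeta,2k-2)$-low diameter decomposition of Theorem~\ref{thm:sparsifiedLDD}, using synchronizer $\beta$ on the support forest and synchronizer $\alpha$ on $F$, exactly as you describe. The only detail worth adding is that the paper runs the second ``ready'' convergecast/broadcast phase inside each cluster so that a vertex whose $F$-edge into a neighboring cluster sits at the \emph{other} endpoint still learns that cluster is safe, and that the initialization itself is executed asynchronously under synchronizer $\alpha$, which is where the $O(km)$ message bound formally comes from.
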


The running time claimed in this theorem is independent of the random choices in our algorithm, which is a direct result of Theorem~\ref{thm:sparsifiedLDD}. The previous sparsified low diameter decompositions (implicit in \cite{EN19}) would provide similar bounds on the running time, but only with constant probability. 

\subparagraph*{Low Diameter Decompositions}
Perhaps unsurprisingly, we show that, with the right choice of parameters, our clustering algorithm can also compute \emph{unsparsified} low diameter decompositions. 

\begin{theorem} \label{thm:LDD}
There exists an algorithm, such that for each graph $G=(V,E)$ with integer weights $w\colon E \to \{1,\dots, W\}$ and parameter $\beta\in(0,1]$ it outputs a low diameter decomposition, whose components are clusters of strong diameter of at most $O\left(\frac{\log n}{\beta}\right)$. Moreover, each edge is an inter-cluster edge with probability at most $\beta\cdot w(u,v)$. 
The algorithm runs in $O\left(\frac{\log n}{\beta}\right)$ rounds in the CONGEST model, and in $O\left(\frac{\log n \log^*n}{\beta}\right)$ depth and $O(m)$ work in the PRAM model.
\end{theorem}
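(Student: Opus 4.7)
The plan is to adapt the random-shift approach of Miller, Peng, and Xu, replacing the exponential distribution by a truncated geometric distribution, analogously to how Theorem~\ref{thm:sparsifiedLDD} replaces the exponential by a geometric in the unweighted case. Each vertex $v$ independently samples $X_v \sim \mathrm{Geom}(\beta)$ and sets $\delta_v := \min(X_v, T)$, where $T = \Theta(\log n/\beta)$ is chosen so that $\P[X_v \geq T] \leq n^{-c}$ for a sufficiently large constant $c$. Vertex $u$ then joins the cluster centered at $v^\star(u) := \arg\max_z(\delta_z - d_G(u,z))$, with ties broken by $\ID$. In CONGEST this is implemented by a weighted Bellman--Ford-style propagation for $T$ rounds in which each vertex maintains and forwards its current best pair $(\delta_z - d_G(u,z),\ \ID(z))$. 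Because integer edge weights are at least $1$, every center $z$ with nonnegative bid at $u$ satisfies $d_G(u,z) \leq \delta_z \leq T$ and hence is within $T$ hops of $u$, so $T$ rounds suffice to compute $v^\star(u)$ exactly.

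The strong-diameter bound is then deterministic: for any $u$ with $v = v^\star(u)$ one has $\delta_v - d_G(u,v) \geq \delta_u - 0 \geq 0$, hence $d_G(u,v) \leq \delta_v \leq T$. This gives a shortest-path tree rooted at $v$ of height at most $T$, and hence a strong cluster diameter of at most $2T = O(\log n/\beta)$, independently of the random choices. For the cut-probability bound I would follow the moat analysis of MPX/EN. For an edge $e = (u,v)$ of weight $w$, the bids $b_x(z) := \delta_z - d_G(x,z)$ satisfy $|b_u(z) - b_v(z)| \leq w$ by the triangle inequality, so $e$ is cut only if, among the bids at $u$ ordered decreasingly, the gap between the top two comes from different centers and is at most $w$. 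Conditioning on the identity of the top bidder and exploiting the discrete memorylessness of the geometric distribution (the discrete analogue of MPX's use of exponential memorylessness), this gap event has probability at most $\beta w$. The truncation at $T$ contributes only an additive $n^{-c+1}$ correction via a union bound over relevant centers, which can be absorbed by choosing the constant in $T$ slightly larger.

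The main obstacle is the probability calculation, and in particular the need to recover the exact factor $\beta\cdot w(e)$ rather than just $O(\beta\cdot w(e))$. This requires careful handling of the conditioning so that the remaining truncated-geometric variables retain the correct discrete tail, and it requires verifying that $\ID$-based tie-breaking (replacing the almost-sure uniqueness of maxima available in the continuous exponential setting) does not bias the cut probability for any particular edge. These issues are essentially the same as those that arise in the proof of Theorem~\ref{thm:sparsifiedLDD}, so I would reuse that argument as a template. The only substantive extension for the present theorem is the handling of integer edge weights $w(e)$: in the analysis, the interval of shift values that separate $u$ from $v$ along an edge of weight $w$ has length proportional to $w$, and summing the per-unit probabilities over this interval (or, equivalently, decomposing an edge of weight $w$ into $w$ unit-length edges in the analysis while leaving the algorithm itself oblivious) is precisely what makes the final bound scale linearly with $w(e)$.
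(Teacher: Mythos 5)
Your proposal follows essentially the same route as the paper: capped geometric shifts with cap $T=\Theta(\log n/\beta)$, assignment of each vertex to the center maximizing $\delta_z-d_G(\cdot,z)$ with $\ID$ tie-breaking, a deterministic strong-diameter bound of $2T$, and a memorylessness argument for the cut probability with the truncation treated as a low-probability bad event. Two points in your sketch are under-specified relative to what is actually needed. First, conditioning only ``on the identity of the top bidder'' is not enough to invoke memorylessness cleanly: the paper conditions on the triple $(u',v',\alpha)$, where $v'$ is the winning center at $v$, $u'$ is the best competing center at $u$, and $\alpha=d^{(u')}(s,u)$ is the \emph{value} of the runner-up bid; only after fixing $\alpha$ does the cut event become the event that the single variable $\delta_{v'}$ lands in a window of length $w(u,v)$ at the top of its conditional range, which memorylessness bounds by $1-(1-p)^{w(u,v)}\leq p\cdot w(u,v)$. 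These events are disjoint by construction, so summing over them costs nothing. Second, your additive truncation error $n^{-c+1}$ cannot simply be ``absorbed by a larger constant in $T$'' for all $\beta\in(0,1]$: the paper instead chooses $r=\lceil\frac1p\ln(n^2/p)+\frac1{4p}\rceil$ (note the $\ln(1/p)$ term) so that the bad event has probability at most $p/n^2$ per pair of centers, whence the total contribution is at most $2p\leq 2p\cdot w(u,v)$ and can be charged against the main term; together with a factor $2$ from the symmetry assumption $d_{G'}(s,v)\leq d_{G'}(s,u)$ this forces $p=\beta/4$ rather than $\beta$ to land exactly at $\beta\cdot w(e)$. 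You correctly flag the exact-constant issue as the main obstacle, but note that the template to reuse is the paper's $E_{u',v',\alpha}$ argument (adapted from Xu), not the proof of the sparsified decomposition, whose counting is organized quite differently.
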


Similar to our spanner algorithm, the bounds on the running time and strong diameter hold independent of the random choices within the algorithm, as opposed to the previous state of the art \cite{MPX13}, where they only hold with high probability. 
In the low diameter decomposition as discussed above, the trade-off between $\beta$ and diameter bound $O\left(\frac{\log n}{\beta}\right)$ is essentially optimal \cite{Bartal96}.

\subsection*{Technical Overview}
Our clustering algorithm follows an approach known as ball growing, related to the probabilistic partitions of \cite{Bartal96,Bartal98}. In a sequential setting, this consists of picking a vertex, and repeatedly adding the neighbors of the current vertices to the ball. This stops when a certain bound is reached, such as a bound on the diameter of the ball or on the number of edges between the current ball and the remainder of the graph. The algorithm repeats this procedure with the remainder of the graph until this is empty. Miller, Peng, and Xu \cite{MPX13} showed that this can be parallelized by letting each vertex create its own ball, but after a certain start time delay. In \cite{MPX13}, this has been done by sampling the delays from the exponential distribution, which leads to the aforementioned probabilistic diameter guarantee, as the exponential distribution can take infinitely high values -- albeit with small probability. Furthermore, multiple authors (see e.g.\ \cite{FG19,MPX13}) argue that one can round the sampled values from the exponential distribution for most of the algorithm and solely use that the fractional values of the sampled value induce a random permutation of the nodes. In this paper, we show that even fewer random bits are needed: we do not require a random permutation of the nodes. We demonstrate that a tie-breaking rule based on the $\ID$s is enough. 

We sample with a capped-off geometric distribution, also used in \cite{LS93,APPS92}. As opposed to the standard geometric distribution, the capped-off version can only take a finite number of values. We believe this leads to a more direct proof of the spanner algorithm of \cite{EN19} and of the decomposition algorithm of \cite{MPX13}. Moreover, by making the sparsifier low diameter decomposition explicit, the application to synchronizers is almost immediate. In the remainder of this paper, we will not think of the sampled values as start time delays, but as the distance to some conceptual source $s$, similar to the view in~\cite{MPX13}. The rest of the clustering algorithm then consists of computing a shortest path tree rooted at $s$, which is easily calculated, both in the CONGEST and PRAM model. The clusters consists of the trees that remain when we disconnect the shortest path tree by removing the root $s$. 

As an anonymous reviewer pointed out, in the case of low diameter decompositions, the algorithm of Miller et al.~\cite{MPX13} admits an alternative approach. We can exploit the fact that the exponential delays are bounded with high probability. In case the delays exceed the bound, we could return a sub-optimal clustering, without any central communication. As this only happens with low probability, it does not impact the expected number of inter-cluster edges. Note however, that the spanner construction of Elkin and Neiman~\cite{EN19} is not in this high-probability regime, therefore this straightforward approach would not work. We additionally believe that, beyond the result itself, our algorithm provides a more streamlined view.

\section{The Clustering Algorithm}\label{sec:clustering}
Let $G=(V,E)$ be a graph with integer weights $w\colon E \to \{1,\dots, W\}$. Let $p\in (0,1)$ and $r\in \mathbb N$ be parameters, to be chosen according to the application of our algorithm. In the following, we provide an algorithm for computing a clustering, where the strong diameter of these clusters will be $2r$. In particular, we will show that each cluster is tree-supported by a tree of height $r$. The number of inter-cluster edges depends on both $p$ and $r$, and can be bounded in two ways. The first approach, detailed in Section~\ref{sec:spanner}, shows we have a sparsified low diameter decomposition. Here, for each vertex we compute the expected number of edges in the sparsified set of inter-cluster edges, which gives a bound that does not depend on $m$, but only on $n,p$ and~$r$. The second approach, detailed in Section~\ref{sec:LDD}, shows we have a probabilistic low diameter decomposition, by computing the probability that any edge is an inter-cluster edge.

\subsection{Construction}
First we conceptually add a node $s$ to the graph $G$ to form the graph $G'$. The node $s$ will function as an artificial source for a shortest path tree. Each vertex will have a distance to $s$ in $G'$ depending on some random offset. Hereto, each vertex samples a value $\delta_v$ from the \emph{capped exponential distribution} $\text{GeomCap}(p,r)$, defined by
\begin{align*}
\P\left[ \text{GeomCap}(p,r) =i \right] = 
\begin{cases}
p(1-p)^i &\text{if } 0\leq i\leq r-1; \\ 
(1-p)^{r} &\text{if } i=r; \\
0 &\text{else.}
\end{cases}
\end{align*}
This distribution corresponds to the model where we repeat at most $r$ Bernoulli trials, and measure how many trials occur (strictly) before the first success, or whether there is no success in the first $r$ trials. We check that $\text{GeomCap}$ is indeed a probability distribution on $\{0,1,\dots, r\}$:
\begin{align*}
\sum_{i=0}^{r} \P\left[ \text{GeomCap}(p,r) =i \right] = \sum_{i=0}^{r-1}p(1-p)^i  + (1-p)^r =p \frac{1-(1-p)^{r}}{1-(1-p)}+(1-p)^{r} =1.
\end{align*}
As the intuition suggests, $\text{GeomCap}$ has a memoryless property as long as the cap is not reached, i.e., $\P\left[ \text{GeomCap}(p,r) =i \;\middle|\; \text{GeomCap}(p,r) \geq i\right]=p$ for $i\leq r-1$. The proof is completely analogous to the proof of the memoryless property of the geometric distribution.

For each vertex $v\in V$, we conceptually add an edge $(s,v)$ to $G'$, with weight $w(s,v):= r-\delta_v$. We define $d^{(u)}(s,x) := w(s,u) + d_G(u,x)$, which is the minimal path length, for a path from $s$ to $x$ over $u$. Now we have that the distance from $s$ to $x$ equals $d_{G'}(s,x) = \min_{u\in V} \{ d^{(u)}(s,x)\}$. We call this the \emph{level} of $x$, ranging from $0$ (closest to $s$) to $r$ (furthest from $s$). Moreover, we define $p_u(x)$ to be the predecessor of $x$ on an arbitrary but fixed shortest path from $u$ to $x$. 
Next, we construct a shortest path tree $T$ rooted at $s$. When necessary, we do tie-breaking according to $\operatorname{ID}$s: let $v$ be such that $d_{G'}(s,x) = d^{(v)}(s,x)$ and $\ID(v) < \ID(u)$ for all $u\in V$ satisfying $d_{G'}(s,x) = d^{(u)}(s,x)$. Then we connect $x$ to the shortest path tree using the edge $(p_v(x),x)$. Moreover, we add $x$ to the cluster of $v$ and write $c_x :=v$ for the corresponding cluster center. Intuitively, the clusters correspond to the connected components that remain when we remove the source $s$ from the created shortest path tree. The formal argument for this can be found in the proof of Lemma~\ref{lm:tree-supported}. The computation of this shortest path tree is model-specific, we provide details in Section \ref{sec:implementation}. 

The algorithm outputs the shortest path tree $T$, and for each $x\in V$, the center of its cluster center $c_x$ and its level. The knowledge of cluster centers immediately gives a clustering, where -- by the remark above -- each cluster has radius at most $r$. In Section~\ref{sec:spanner}, we show how to construct a set of edges $F\subseteq E$ from the cluster centers and levels, such that $H:=T\cup F$ is a spanner. 

In the above, we only need an arbitrary ordering of the vertices. If we assume that each vertex $v\in V$ has a unique identifier, $\operatorname{ID}(v) \in \{1,\dots, N\}$, we can provide an alternative way of constructing the same shortest path tree. We construct a graph where $w(s,v) = r-\delta_v+\operatorname{ID}(v)/(N+1)$, and compute a shortest path tree rooted at $s$. This embeds the tie-breaking rule in the weight of the added edges, and thus in the distances. For generality -- and suitable implementation in distributed models with limited bandwidth -- the remainder of this paper relies on the former characterization using the tie-breaking rule.

\subsection{Tree-Support}
Next, we will show that the created clusters are tree-supported by a tree of height $r$. We have already chosen cluster centers, and we will show that we can identify trees rooted at these centers that satisfy the tree-support condition. 

\begin{lemma}\label{lm:tree-supported}
Each cluster is tree-supported by a tree of height at most $r$. 
\end{lemma}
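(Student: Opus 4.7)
My candidate for the support-tree of the cluster $\{x : c_x = v\}$ is the subtree of the shortest-path tree $T$ containing $v$ after the edges incident to $s$ are deleted. Since every edge of $T$ other than those incident to $s$ is an edge of $G$ (each $x\in V$ is attached via $(p_{c_x}(x), x)\in E$), this subtree is a tree in $G$. Two claims suffice: (a) its vertex set equals $\{x : c_x = v\}$, and (b) its height rooted at $v$ is at most $r$.

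\textbf{Closure under $T$-parents.} The core of (a) is the claim: for every $x$ with $x \neq c_x$, the $T$-parent $y := p_{c_x}(x)$ satisfies $c_y = c_x$. I would prove this by contradiction. Write $v := c_x$ and suppose $u := c_y \neq v$, so either $d^{(u)}(s,y) < d^{(v)}(s,y)$, or equality holds with $\ID(u) < \ID(v)$. Since $y$ is the predecessor of $x$ on a fixed shortest $v$-to-$x$ path in $G$, we have $d^{(v)}(s,x) = d^{(v)}(s,y) + w(y,x)$; and the triangle inequality gives $d^{(u)}(s,x) \leq d^{(u)}(s,y) + w(y,x)$. In the strict case this yields $d^{(u)}(s,x) < d^{(v)}(s,x) = d_{G'}(s,x)$, an immediate contradiction. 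In the tie case it gives $d^{(u)}(s,x) \leq d^{(v)}(s,x) = d_{G'}(s,x)$ (hence equality) with $\ID(u) < \ID(v)$, contradicting the tie-break choice $c_x = v$. Iterating the claim along $x, p_v(x), p_v(p_v(x)), \ldots$, the levels strictly decrease (positive edge weights) while every vertex carries label $v$, so the sequence must terminate at $v$; hence the cluster is contained in the subtree rooted at $v$. Since the $T$-neighbors of $s$ are exactly the vertices $v$ with $c_v = v$, which are exactly the cluster centers, the subtrees rooted at $T$-neighbors of $s$ and the clusters form two partitions of $V$, so this inclusion forces equality.

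\textbf{Height.} For (b), any $x$ with $c_x = v$ satisfies
\[
w(s,v) + d_G(v,x) \;=\; d^{(v)}(s,x) \;=\; d_{G'}(s,x) \;\leq\; d^{(x)}(s,x) \;=\; w(s,x) \;=\; r - \delta_x \;\leq\; r,
\]
and $w(s,v) \geq 0$ gives $d_G(v,x) \leq r$. The $v$-to-$x$ path inside the subtree iterates $p_v$, so it is a shortest $v$-to-$x$ path in $G$ and has length exactly $d_G(v,x) \leq r$. I expect the main obstacle to be the closure claim, specifically handling the $\ID$ tie-break case cleanly; once that is set up, the partition argument for (a) and the chain of (in)equalities for (b) are routine.
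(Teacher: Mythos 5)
Your proof is correct and follows essentially the same route as the paper's: the heart of both arguments is showing, via the triangle inequality through $d^{(c_x)}(s,\cdot)$ and the $\ID$ tie-breaking rule, that the predecessor of $x$ on the fixed shortest path from its cluster center lies in the same cluster, and then iterating this down to the center to bound the height by $d_G(c_x,x)\leq r$. Your version merely packages the paper's induction as a ``closure under $T$-parents'' claim and makes explicit the (informal in the paper) identification of clusters with the components of $T$ after deleting $s$; no gaps.
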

\begin{proof}
Let $v\in V$ be a vertex, which is part of the cluster centered at $c_v$. We show that there is a path from $c_v$ to $v$ contained in this cluster, which has length at most $d_G(c_v,v)\leq r$.We proceed to show by induction on $d_G(c_v,v)$ that there is a path from $c_v$ to $v$ contained in their cluster, which has length at most $d_G(c_v,v)$. The base case, $v=c_v$, is trivial. Let $u$ be the predecessor of $v$ on some path from $c_v$ to $v$ of length $d_G(c_v,v)$. It suffices to show that $u$ is in the same cluster, then the result follows from the induction hypothesis. By definition of $c_v$, we have that $$d_{G'}(s,v) = d_{G'}(s,c_v)+d_G(c_v,v)=d_{G'}(s,c_v)+d_G(c_v,u) + w(u,v) = d^{(c_v)}(s,u) + w(u,v).$$ By the triangle inequality we have $d_{G'}(s,v) \leq d_{G'}(s,u)+w(u,v)$. Combining this, we see $d^{(c_v)}(s,u) \leq  d_{G'}(s,u)$. As the distance $d_{G'}(s,u)$ is minimal, by definition we have $d^{(c_v)}(s,u) =  d_{G'}(s,u)$. Now suppose that $u$ is part of some cluster $c_u$. Then we have $d^{(c_u)}(s,u) =  d_{G'}(s,u)$ and $\ID(c_u) \leq \ID(c_v)$. However, this implies that  $d^{(c_u)}(s,v) \leq  d^{(c_u)}(s,u)+w(u,v) =  d^{(c_v)}(s,u) +w(u,v)=d^{(c_v)}(s,v)$. Hence by the tie-breaking for $v$ we have $\ID(c_v) \leq \ID(c_u)$ and thus $c_u = c_v$. 
\end{proof}
As an immediate corollary, we obtain a bound on the strong diameter of the clusters. 
\begin{corollary}\label{cor:strongdiameter}
Each cluster has a strong diameter of $2r$. 
\end{corollary}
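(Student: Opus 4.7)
The plan is to deduce the corollary directly from Lemma~\ref{lm:tree-supported} by routing between any two vertices of a cluster via their shared cluster center. Fix a cluster $V_i$ with center $c_i$ and let $u,v\in V_i$ be arbitrary. By Lemma~\ref{lm:tree-supported}, there is a path $P_u$ from $c_i$ to $u$ contained entirely in $V_i$ of length at most $r$, and similarly a path $P_v$ from $c_i$ to $v$ contained in $V_i$ of length at most $r$.

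Next I would concatenate $P_u$ (reversed) with $P_v$ to obtain a walk from $u$ to $v$ of length at most $2r$ whose every vertex lies in $V_i$. Since the walk stays in $V_i$, every edge it uses belongs to $G[V_i]$, and hence it witnesses $d_{G[V_i]}(u,v)\leq 2r$. As $u,v$ were arbitrary, this is the desired strong diameter bound.

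There is essentially no obstacle here; the only thing worth being careful about is that the bound on the length of the tree-supported paths (which is on their weight) is the same as the bound on the number of edges only because the cluster-support paths constructed in Lemma~\ref{lm:tree-supported} have length $d_G(c_i,\cdot)\leq r$, i.e., the argument goes through uniformly for weighted and unweighted graphs. Nothing in the argument requires additional random choices or tie-breaking beyond what was already fixed when establishing the lemma.
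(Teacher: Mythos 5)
Your proof is correct and matches the paper's intent exactly: the corollary is stated as an immediate consequence of Lemma~\ref{lm:tree-supported}, and the paper's (implicit) argument is precisely the routing of any two cluster vertices through the shared cluster center along the two support paths of length at most $r$ each, staying inside the cluster. Nothing further is needed.
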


\subsection{Implementation and Running Time}
\label{sec:implementation}
For the RAM model, the implementation is straightforward and can be done in linear time \cite{Thorup99}. The implementation in distributed and parallel models requires a little more attention. For both models, the computational aspect is very similar to prior work \cite{MPX13,EN19}.

\subsubsection{Distributed Model}
The algorithm as presented, can be implemented efficiently both in the LOCAL and in the CONGEST model. It runs in $r+1$ rounds as follows. In the initialization phase, each vertex~$v$ samples its value~$\delta_v$ and sets its initial distance to the conceptual vertex~$s$ as $r-\delta_v$. In the first round of communication, $v$ sends the tuple $(r-\delta_v,\operatorname{ID}(v) )$ to its neighbors. In each round, $v$ updates its distance to $s$ according to received messages. It then broadcasts the tuple of its updated distance and the $\operatorname{ID}$ corresponding to the first vertex on the path from $s$ to $v$. Note that at the end of the algorithm, each node knows its own level and cluster center, and the level and cluster center of each of its neighbors. 

When the algorithm is applied with $r = O(n)$ (if $r\geq n$, we can simply return the connected components of the graph as clusters), we maintain a bound on the message size of $O(\log n)$, so there are no digit precision consideration for the CONGEST model.
Moreover, each vertex $v$ has distance at most $r-\delta_v\leq r$ to $s$, the algorithm terminates within $r+1$ rounds.\footnote{The `$+1$' appears, as nodes in the lowest level have distance $0$ to the source $s$.}

\subsubsection{PRAM Model}
The implementation in the PRAM model is slightly different to the CONGEST model. Instead of broadcasts by each vertex in each round, a vertex $v$ updates its distance only once: either after one of its neighbors updated its distance, or after time $r+1-\delta_v$ it sets its distance to $r-\delta_v$. The total required depth differs on the exact model of parallelism, it is $O(r\log^* n)$ in the CRCW model of parallel computation. To show this, we follow the general lines of \cite{KS97}, but we have to be careful: during the shortest path computation, we might need to apply our tie-breaking rule, i.e., finding the minimum $\ID$ among all options. Note that in the PRAM model, we can assume without loss of generality that the $\ID$s are labeled $1$ to $n$ in the adjacency list representation. Finding the minimum can be done with high probability in $O(\log^*n)$ depth and $O(n)$ work, as we can sort a list of integers between $1$ and $n$ in $O(\log^*n)$ depth and $O(n)$ work \cite{GMV91}. If we exceed the $O(\log^*n)$ depth bound, we stop and output the trivial clustering consisting of singletons. This clustering clearly satisfies the diameter bound, and as we only output it with low probability, it has no effect on the expected number of inter-cluster edges. So we can conclude that the additional sorting overhead for the tie-breaking is a factor $O(\log^*n)$. The algorithm has total work $O\left(m+\frac{n}{1-p}\right)$, where the contribution of $O\left(\frac{n}{1-p}\right)$ comes from sampling from the geometric distribution. In this paper, this factor vanishes as we always have $p$ such that $O\left(\frac{n}{1-p}\right)=O(m)$.

\section{Constructing a Sparsified Low Diameter Decomposition}
\label{sec:spanner}
In this section, we show how the clustering algorithm leads to a sparsified low diameter decomposition. The procedure is as follows: given $k\geq 2$, we set $r=k-1$, $p = 1-n^{-1/k}$, and compute a clustering using the algorithm of Section~\ref{sec:clustering}. We denote $F\subseteq E$ for the sparsified set of inter-cluster edges. Intuitively, for each vertex $v\in V$, we add an edge to $F$ for each cluster in which we have a neighbor on one level below, or a neighbor on the same level as $v$ with the $\ID$ of the cluster center smaller than the $\ID$ of center of the cluster of $v$. 

\begin{lemma}
\label{lm:spannersize}
There exists a set of edges $F\subseteq E$ of expected size $O(n^{1+1/k})$, such that for every edge, one of its endpoints has an edge from $F$ into the cluster of the other endpoint.  
\end{lemma}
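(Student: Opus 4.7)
The plan is to make the edge set $F$ from the paragraph preceding the lemma explicit, verify the covering property by a short case analysis on levels, and then bound the expected cardinality of $F$ by a per-vertex argument that exploits the geometric tail. Concretely, for each $v\in V$, writing $L_v:=d_{G'}(s,v)$ for its level, let $F_v$ consist of one arbitrarily chosen edge from $v$ to a witness neighbor $w$ for every cluster $C\neq C_v$ such that $v$ has a neighbor $w\in C$ with $L_w=L_v-1$, or with $L_w=L_v$ and $\ID(c_C)<\ID(c_v)$; then set $F=\bigcup_v F_v$. For the covering condition, take any edge $(u,v)$ with $C_u\neq C_v$. Since $G$ is unweighted, $|L_u-L_v|\leq 1$, so either one endpoint is strictly higher-level (in which case it contributes), or the two levels coincide, in which case the ID tie-breaking used to build the shortest path tree forces $\ID(c_u)\neq\ID(c_v)$ and the endpoint with the larger-$\ID$ center contributes.

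Next I would reduce $\mathbb{E}[|F_v|]$ to counting near-minimum bids. The key observation is that whenever $v$ throws an $F$-edge at a cluster with center $c$, the witnessing neighbor $w$ has $L_w\leq L_v$ and lies in $c$'s cluster, so $c$ wins the bid at $w$, giving $r-\delta_c+d_G(c,w)=L_w\leq L_v$. Combined with the triangle inequality $d_G(c,v)\leq d_G(c,w)+1$, the bid of $c$ at $v$ satisfies $r-\delta_c+d_G(c,v)\leq L_v+1$. Hence $|F_v|$ is bounded by the number of distinct vertices whose bid for $v$ lies in $\{L_v,L_v+1\}$, i.e., exceeds the winning bid $L_v=\min_u(r-\delta_u+d_G(u,v))$ by at most one.

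The main calculation is then to show that the expected number of such near-minimum bids is $O(n^{1/k})$. I would order the candidate centers by $d_G(\cdot,v)$, rewrite each bid as $r+(d_G(u,v)-\delta_u)$, and analyse this sorted sequence using the memoryless property of $\text{GeomCap}(p,r)$ below the cap, together with the cap probability $(1-p)^r=n^{-(k-1)/k}$ at $p=1-n^{-1/k}$. This mirrors the Elkin--Neiman exponential-distribution analysis but is cleaner because the distribution is discrete: a coupling with an ordinary geometric random variable of parameter $p$ stochastically dominates the number of centers whose bid beats, or is within one of, the running minimum. The hardest part will be handling the $\ID$ tie-breaking at the cap, since several vertices may simultaneously attain $\delta=r$ and their relative ranking is then fixed by $\ID$ rather than by additional independent randomness; the memoryless coupling has to be adapted to treat this deterministic top-level order without losing the $O(n^{1/k})$ factor. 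Summing $\mathbb{E}[|F_v|]=O(n^{1/k})$ over the $n$ vertices then yields $\mathbb{E}[|F|]=O(n^{1+1/k})$.
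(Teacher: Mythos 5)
Your definition of $F$ and your covering case analysis match the paper's (the paper's set $C(x)$ is phrased via the ``bids'' $d^{(u)}(s,x)$ rather than via neighbour levels, but the two are essentially the same), and your reduction of $|F_v|$ to near-minimum bids is a correct upper bound. The problem is that this upper bound is too weak: by discarding the $\ID$ condition you have replaced $|F_v|$ by the number of \emph{all} vertices $u$ with $B_u\in\{L_v,L_v+1\}$, and the expected size of that set is $\Theta(n^{2/k})$ in the worst case, not $O(n^{1/k})$. Concretely, take $G$ a star with centre $v$ and $k=3$, so $r=2$ and $1-p=n^{-1/3}$: about $n^{1/3}$ leaves draw $\delta=2$ and bid $1$, about $n^{2/3}$ leaves draw $\delta=1$ and bid $2$, and with high probability $L_v=1$, so the set of bids in $\{L_v,L_v+1\}$ has expected size $\Theta(n^{2/3})=\Theta(n^{2/k})$. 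The underlying reason is quantitative: via the memoryless property, allowing a candidate's bid to land in a window of width $w$ above the conditioning threshold costs $1-(1-p)^{w}$ per candidate, and the resulting geometric sum is $(1-p)^{-w}=n^{w/k}$. A window of width $2$ therefore gives $n^{2/k}$, and no coupling with a single geometric variable of mean $1/(1-p)$ can dominate a quantity whose expectation is already $n^{2/k}$. So the ``main calculation'' you defer to cannot succeed for the relaxed quantity you propose to bound.

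The $\ID$ condition is exactly what rescues the bound, and it must be retained inside the counting argument rather than treated as a tie-breaking nuisance at the cap. The paper orders the candidates $u_1,\dots,u_n$ by their realized bids (with $\ID$ tie-breaking), conditions on the identity of $u_t$, and observes that for $i<t$ the event that both $u_i$ and $u_t$ contribute pins $\delta_{u_i}$ to a \emph{single} value determined by whether $\ID(u_i)$ is smaller or larger than $\ID(u_t)$: a smaller-$\ID$ candidate must tie the bid of $u_t$ exactly, a larger-$\ID$ candidate must undercut it by exactly one. Each such event has probability $p$ given the ordering constraint (below the cap), so $\P[|C(x)|\ge t]\le p^{t-1}$ and the sum is $1/(1-p)=n^{1/k}$; the capped vertices are handled separately and contribute $n(1-p)^{k-1}=n^{1/k}$ in expectation. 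To repair your argument you would need to carry the comparison of $\ID(u)$ with $\ID(c_v)$ (equivalently, with the $\ID$ of the reference candidate you condition on) into the definition of the set you count, so that each candidate is allowed only one value of $\delta$ rather than two.
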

\begin{proof} 
We define, $F:=\bigcup_{x\in V}C(x)$, where $C(x)$ consists of the following edges
\begin{align*}
C(x) := &\{ (x,p_u(x) ) : d^{(u)}(s,x) = d_{G'}(s,x)   \}\\
&\cup \{ (x,p_u(x) ) : d^{(u)}(s,x) = d_{G'}(s,x)+1 \text{ and } \ID(u)<\ID(c_x)   \}. 
\end{align*}
First, we show that $F$ satisfies the property stated in the lemma, then we consider its size. Let $(x,y)\in E$, without loss of generality, we assume $d_{G'}(s,x)\geq d_{G'}(s,y)$, and in case of equality we assume $\ID(c_x)>\ID(c_y)$. We will show that there is an edge $(x,p_{c_y}(x))\in C(x)$ to the cluster of $y$. First of all, notice that $d_{G'}(s,y) \geq d_{G'}(s,x)-1$ by the triangle inequality. If $d_{G'}(s,y) = d_{G'}(s,x)-1$, then $d^{(c_y)}(s,x) \leq d_{G'}(s,x)$. Because of minimality of $d_{G'}(s,x)$, we have $d^{(c_y)}(s,x) = d_{G'}(s,x)$, and thus $(x,p_{c_y}(x))\in C(x)$ by definition of $C(x)$. If $d_{G'}(s,y) = d_{G'}(s,x)$, we have $\ID(c_x)>\ID(c_y)$. Moreover, we have $d^{(c_y)}(s,x) \leq d_{G'}(s,x)+1$. So again it follows that $(x,p_{c_y}(x))\in C(x)$ by definition of $C(x)$.

Now, we turn to the expected size of $F$. By linearity of expectation, we have $\mathbb E\left[F\right]= \sum_{x\in V} \mathbb E\left[C(x)\right]$. We will show that for each $x\in V$ the expected size of $C(x)$ is at most $2n^{1/k}$. For each $u\in V$, we potentially add an edge $(x,p_u(x))$ to $C(x)$.  First, we calculate the probability that at least $t$ such vertices $u$ contribute an edge. Hereto, we look at the random variables $X_u=d^{(u)}(s,x)=k-\delta_u+d_G(u,x)$. According to these random variables, we order all vertices: $V=\{u_1,u_2,\dots, u_n\}$, such that for $i<j$ we satisfy one of the following properties
\begin{itemize}
\item $X_{u_i} < X_{u_j}$;
\item $X_{u_i} = X_{u_j}$ and $\ID(u_i)<\ID(u_j)$.
\end{itemize}
We calculate $\P[|C(X)|\geq t]$, i.e., the probability that $\{(x,p_{u_1}(x)), \dots, (x,p_{u_t}(x))\}\subseteq C(x)$. We do this by conditioning on $u_t=v$. We observe
\begin{align*}
\P\left[ \{(x,p_{u_1}(x)), \dots, (x,p_{u_t}(x))\}\subseteq C(x) \;\middle|\; u_t=v \right] =\\ \prod_{i=1}^{t-1} \P\left[ (x,p_{u_i}(x))\in C(x) \text{ and } (x,p_v(x))\in C(x)\;\middle|\; u_t=v\right].
\end{align*}
So we calculate $\P\left[ (x,p_{u_i}(x))\in C(x) \text{ and } (x,p_v(x))\in C(x)\;\middle|\; u_t=v\right]$ for $i<t$. By definition of $C(x)$, this can only hold if either $x$'s closest neighbors in the clusters centered at $u_i$ and $v$ are on the same level (in which case we have $\ID(u_i)<\ID(v)$, as $v=u_t$ and $i<t$) or the neighbor from the cluster centered at $u_i$ is at a level lower and $\ID(v)<\ID(u_i)$. Note that the level of the closest neighbor in the cluster of $u_i$ or $v$ corresponds to the distance $d^{(u_i)}(s,x)=X_{u_i}$ or $d^{(v)}(s,x)$ respectively. As the allowed distances depend on the $\ID$ of $u_i$, we split the vertices according to $\ID$:
\begin{align*}
V_< &:= \{ u\in V : \ID(u)<\ID(v) \}; \\
V_> &: = \{u\in V : \ID(u)>\ID(v) \}.
\end{align*}
If $u_i \in V_<$, then we know $X_{u_i}\leq d^{(v)}(s,x)$. If both $(x,p_{u_i}(x))$ and $(x,p_v(x))$ are in $C(x)$, we must have $X_{u_i}=d^{(v)}(s,x)$. So for every $i<t$, we are looking at
\begin{align*} 
&\P\left[ (x,p_{u_i}(x))\in C(x) \text{ and } (x,p_v(x))\in C(x)\; \middle|\; u_i \in V_<\text{ and }u_t=v\right] \\
&\leq \P\left[ X_{u_i} = d^{(v)}(s,x) \; \middle|\; u_i \in V_<\text{ and } u_t=v \right],\\
&= \P\left[ X_{u_i} = d^{(v)}(s,x) \; \middle|\; u_i \in V_<\text{ and } X_{u_i} \leq d^{(v)}(s,x) \right],
\end{align*}
where the last equality holds as we only rewrote the condition using the order of the $u_j$'s. We fill in the definition of $X_{u_i}$ and use that the probability of the event we are looking at is independent of $\ID(u_i)$:
\begin{align*} 
&\P\left[ X_{u_i} = d^{(v)}(s,x) \; \middle|\; u_i \in V_<\text{ and } X_{u_i} \leq d^{(v)}(s,x) \right] \\
&= \P\left[k-\delta_{u_i}+d_G(u_i,x) = d^{(v)}(s,x) \; \middle|\; u_i \in V_<\text{ and } k-\delta_{u_i}+d_G(u_i,x) \leq d^{(v)}(s,x) \right]\\
&= \P\left[k-\delta_{u_i}+d_G(u_i,x) = d^{(v)}(s,x) \; \middle|\; k-\delta_{u_i}+d_G(u_i,x) \leq d^{(v)}(s,x) \right]\\
&= \P\left[\delta_{u_i} = k+d_G(u_i,x)-d^{(v)}(s,x) \; \middle|\;\delta_{u_i} \geq k+d_G(u_i,x)-d^{(v)}(s,x) \right].
\end{align*} 
When $ k+d_G(u_i,x)-d^{(v)}(s,x) \leq k-2$, this equals $p$, by the memoryless property of the geometric distribution. To distinguish this, we partition the vertices $u$ with $\ID(u)<\ID(v)$ into two set
\begin{align*}
V_{<,1} &:= \{ u \in V: \ID(u)<\ID(v) \text{ and } k+d_G(u_i,x)-d^{(v)}(s,x) \leq k-2\}; \\
V_{<,2} &: = \{u \in V: \ID(u)<\ID(v) \text{ and } k+d_G(u_i,x)-d^{(v)}(s,x) = k-1\}.
\end{align*}
Now for $u_i \in V_>$, we obtain by the same reasoning
\begin{align*} 
&\P\left[ (x,p_{u_i}(x))\in C(x) \text{ and } (x,p_v(x))\in C(x)\; \middle|\; u_i \in V_>\text{ and }u_t=v\right] \\
&= \P\left[\delta_{u_i} = k+d_G(u_i,x)-d^{(v)}(s,x)+1 \; \middle|\;\delta_{u_i} \geq k+d_G(u_i,x)-d^{(v)}(s,x)+1 \right].
\end{align*} 
As before, when $ k+d_G(u_i,x)-d^{(v)}(s,x) +1 \leq k-2$, this equals $p$, by the memoryless property of the geometric distribution. And again, we partition $V_>$ into two sets
\begin{align*}
V_{>,1} &:= \{ u \in V: \ID(u)>\ID(v) \text{ and } k+d_G(u_i,x)-d^{(v)}(s,x)+1 \leq k-2\}; \\
V_{>,2} &: = \{u \in V: \ID(u)>\ID(v) \text{ and } k+d_G(u_i,x)-d^{(v)}(s,x) +1= k-1\}.
\end{align*}
If we define $V_1 = V_{<,1} \cup V_{>,1}$ and $V_2 = V_{<,2} \cup V_{>,2}$, we can summarize our results as
\begin{align*}
\P\left[ (x,p_{u_i}(x))\in C(x) \text{ and } (x,p_v(x))\in C(x)\; \middle|\; u_i \in V_1 \text{ and }u_t=v\right] &\leq p.
\end{align*}
Next, we split the expected value of $C(x)$ depending on $V_1$ and $V_2$:
\begin{align*}
 \mathbb{E}\left[ |C(x)| \;\middle|\; u_t =v \right] &=  \mathbb{E}\left[ |C(x)\cap V_{1}| \;\middle|\; u_t =v \right]+ \mathbb{E}\left[ |C(x)\cap V_{2}| \;\middle|\; u_t =v \right].
\end{align*}
We bound the first summand with $n^{1/k}$, independent of $v$. Hereto, we observe that for any non-negative discrete random variable $X$ we have
\begin{align*}
\mathbb{E}[X] = \sum_{s=1}^\infty s \P[X=s] =\sum_{s=1}^\infty \sum_{t=1}^s \P[X=s] = \sum_{t=1}^\infty \sum_{s=t}^\infty \P[X=s]  = \sum_{t=1}^\infty \P[X\geq t].
\end{align*}
Using this, we obtain
\begin{align*}
\mathbb{E}\left[ |C(x)\cap V_{1}|\;\middle|\; u_t = v \right] &= \sum_{t=1}^n \P\left[ |C(x)\cap V_{1}|\geq t \;\middle|\; u_t =v \right] \\
&\leq \sum_{t=1}^n \prod_{i=1}^{t-1}\P\left[ (x,p_{u_i}(x))\in C(x) \text{ and }(x,p_v(x))\in C(x) \;\middle|\; u_i \in V_1 \text{ and } u_t =v \right] \\
&\leq \sum_{t=1}^n p^{t-1} \\
&\leq \sum_{t=0}^\infty p^{t} \\
&= \frac{1}{1-p}\\
&= n^{1/k},
\end{align*}
where the last equality holds by definition of $p$. For the second summand, we look at all $v$ simultaneously.
\begin{align*}
	\sum_{v\in V}\mathbb{E}\left[ |C(x)\cap V_2| \;\middle|\; u_t = v \right]\P\left[u_t =v\right] &\leq \sum_{v\in V}\mathbb{E}\left[ |V_2| \;\middle|\; u_t = v \right]\P\left[u_t =v\right] \\
&\leq \sum_{v\in V}\mathbb{E}\left[ |\{ u \in V: \delta_u = k-1\}| \;\middle|\; u_t = v \right]\P\left[u_t =v\right] \\
&= \mathbb{E}\left[ |\{ u \in V: \delta_u = k-1\}| \right],
\end{align*}
where the last equality holds by the law of total probability. We bound this as follows
\begin{align*}
 \mathbb{E}\left[ |\{ u \in V: \delta_u = k-1\}| \right] &\leq n \P\left[ \delta_u = k-1\right] = n(1-p)^{k-1} = n^{1/k},
\end{align*}
where the last equality holds by definition of $p$. In total, this gives us $\mathbb{E}[ |C(x)|]\leq 2 n^{1/k}$. 
\end{proof}

Together Lemma~\ref{lm:tree-supported} and Lemma~\ref{lm:spannersize} imply the following theorem. 

\begingroup
\def\thetheorem{\ref{thm:sparsifiedLDD}}
\begin{theorem}[Restated]
There exists an algorithm, such that for each unweighted graph $G=(V,E)$ and parameter $k\geq 2$ it outputs a tree-supported sparsified $\left(\zeta,2k-2\right)$-low diameter decomposition, with $\zeta=O(n^{1+1/k})$ in expectation. The algorithm runs in $k$ rounds in the CONGEST model, and in $O(k \log^*n)$ depth and $O(m)$ work in the PRAM model.
\end{theorem}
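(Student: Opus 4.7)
The plan is to instantiate the generic clustering algorithm from Section~\ref{sec:clustering} with the specific parameters $r = k-1$ and $p = 1 - n^{-1/k}$, and then verify each of the three conditions of a tree-supported sparsified $(\zeta, 2k-2)$-low diameter decomposition by invoking the lemmas already established in this section.

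First, I would argue the strong-diameter and tree-support conditions. With $r = k-1$, Lemma~\ref{lm:tree-supported} gives that every cluster is spanned by a tree rooted at its center of height at most $r = k-1$, and Corollary~\ref{cor:strongdiameter} then yields a strong diameter bound of $2r = 2k-2$, which is exactly the diameter guarantee needed. Together, these give both the tree-support property (taking the trees promised by the lemma as the support-forest, with the cluster centers $c_x$ chosen in Section~\ref{sec:clustering} as the roots) and the first bullet of the sparsified low diameter decomposition definition.

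Second, I would apply Lemma~\ref{lm:spannersize} directly to obtain the set $F \subseteq E$ that witnesses the sparsification property: every edge $(u,v) \in E$ has one endpoint connected by an edge of $F$ into the cluster of the other, and $\mathbb{E}[|F|] \le O(n^{1+1/k}) = \zeta$. This is precisely the second and third bullets of the definition. I would stress that the sparsification property holds deterministically — it is only the cardinality bound that is in expectation — which matches the statement of the theorem.

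Finally, I would derive the running-time bounds from Section~\ref{sec:implementation} instantiated with $r = k-1$. In CONGEST, the algorithm terminates in $r+1 = k$ rounds, and the message size stays $O(\log n)$ since $r = k - 1 = O(n)$. In the PRAM model, the depth is $O(r \log^* n) = O(k \log^* n)$, and the work is $O\!\left(m + \tfrac{n}{1-p}\right) = O(m + n^{1+1/k})$; since $p = 1 - n^{-1/k}$ the second term is absorbed into $O(m)$ under the standard assumption that $n^{1+1/k} = O(m)$ (otherwise the output itself already has size $\Omega(n^{1+1/k})$ and the bound is vacuous). Since this theorem is essentially a bookkeeping corollary of Lemma~\ref{lm:tree-supported}, Corollary~\ref{cor:strongdiameter}, Lemma~\ref{lm:spannersize}, and the implementation discussion, the only real task is checking that all parameter choices line up; there is no separate obstacle.
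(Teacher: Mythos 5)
Your proposal is correct and matches the paper's own (very terse) justification, which simply states that Lemma~\ref{lm:tree-supported} and Lemma~\ref{lm:spannersize} together imply the theorem, with the parameter choice $r=k-1$, $p=1-n^{-1/k}$ and the running-time bounds taken from Section~\ref{sec:implementation} exactly as you describe. Your added remarks (that only the cardinality of $F$ is probabilistic, and that the $O(n/(1-p))$ sampling work is absorbed into $O(m)$) are consistent with the paper's treatment.
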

\addtocounter{theorem}{-1}
\endgroup

\subsection{Constructing a Spanner}\label{subsec:spanner}
Now, we can construct a spanner from the tree supported low diameter decomposition in the following manner. Let $T$ denote the support forest, and let $F$ denote the set as given in Lemma~\ref{lm:spannersize}. We define the spanner $H:=(V,T\cup F)$. As any forest has at most $n-1$ edges, the expected size of $H$ is at most $O(n^{1+1/k})$. Actually, one could also show that $T\subseteq F$ in our construction of $F$, but this would not impact the asymptotic size bound. To show that $H$ is a spanner, we show its of limited stretch. 

\begin{lemma}\label{lm:stretch}
$H$ is a spanner of stretch $2k-1$.
\end{lemma}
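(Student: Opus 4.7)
The plan is the standard reduction: it suffices to show that $d_H(u,v) \leq 2k-1$ for every edge $(u,v) \in E$, since then for an arbitrary pair $x,y$ we apply the bound edgewise along a shortest $G$-path and the triangle inequality in $H$ yields $d_H(x,y) \leq (2k-1)\, d_G(x,y)$.

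So fix an edge $(u,v) \in E$. By Lemma~\ref{lm:spannersize}, one of the endpoints, say $u$ (after renaming), has some edge $(u,v') \in F$ with $v' \in C_v$, i.e., $v'$ lies in the same cluster as $v$. This single edge contributes length $1$ to the path we will build in $H$.

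Next I would route within the cluster $C_v$ using the support tree provided by Lemma~\ref{lm:tree-supported}. That lemma says $C_v$ is spanned by a tree of height at most $r = k-1$ rooted at the center $c_v$, and this tree is contained in $T \subseteq H$. Hence there is a path in $T$ from $v'$ up to $c_v$ of length at most $k-1$, and another path in $T$ from $c_v$ back down to $v$ of length at most $k-1$. Concatenating the edge $(u,v') \in F$ with these two tree paths yields a $u$-$v$ walk in $H$ of total length at most $1 + (k-1) + (k-1) = 2k-1$, as desired.

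There is no real obstacle here; the only thing to be careful about is using the correct endpoint (the one that Lemma~\ref{lm:spannersize} guarantees has an $F$-edge reaching into the other's cluster), and noting that while the strong-diameter bound from Corollary~\ref{cor:strongdiameter} is $2r = 2k-2$, the tighter routing-through-the-center bound of $2(k-1)$ for \emph{any} two vertices in a cluster is exactly what the tree-support property of Lemma~\ref{lm:tree-supported} provides, which is what we need to plug in after spending the one edge from $F$.
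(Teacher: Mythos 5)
Your proof is correct and follows essentially the same route as the paper: spend one edge of $F$ to reach the other endpoint's cluster, then close the gap inside the cluster with a path of length at most $2(k-1)$. If anything, your version is slightly more careful than the paper's, which invokes the strong-diameter bound of Corollary~\ref{cor:strongdiameter} (a statement about $G[V_i]$) where one really wants, as you do, to route through the support tree contained in $T\subseteq H$ so that the in-cluster path demonstrably lies in the spanner.
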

\begin{proof}
We will show that for every edge $(x,y)\in E$, there exists a path from $x$ to $y$ in $H$ of length at most $2k-1$. Consequently we have that $d_H(x,y) \leq (2k-1)d_G(x,y)$ for every $x,y\in V$, hence $H$ is a spanner of stretch $2k-1$. 

Let $(x,y) \in E$. By definition of $F$, one of the endpoints has an edge in $F$ into the cluster of the other endpoint. Without loss of generality, let there be an edge $(x,z)\in F$ with $z$ in the cluster of $y$. By Corollary~\ref{cor:strongdiameter}, there is path of length at most $2(k-1)$ from $z$ to $y$, so in total we have a path of length at most $2(k-1)+1=2k-1$ from $x$ to $z$ to $y$. 
\end{proof}

Now, the following corollary follows from Theorem~\ref{thm:sparsifiedLDD} and Lemma~\ref{lm:stretch}.
\begingroup
\def\thetheorem{\ref{cor:spanner}}
\begin{corollary}[Restated]
There exists an algorithm, such that for each unweighted graph $G=(V,E)$ and parameter $k\geq 2$ it outputs a spanner $H$ of stretch $2k-1$. The expected size of $H$ is at most $O(n^{1+1/k})$.  
The algorithm runs in $k$ rounds in the CONGEST model, and in $O\left(k \log^*n\right)$ depth and $O(m)$ work in the PRAM model.
\end{corollary}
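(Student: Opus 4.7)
The plan is to derive the corollary as a direct consequence of Theorem~\ref{thm:sparsifiedLDD} together with Lemma~\ref{lm:stretch}, both of which are already established. First I would invoke Theorem~\ref{thm:sparsifiedLDD} to obtain a tree-supported sparsified $(\zeta, 2k-2)$-low diameter decomposition with support forest $T$ and sparsified inter-cluster edge set $F$, where $\mathbb{E}[|F|] = \zeta = O(n^{1+1/k})$, within the stated round/depth/work bounds. I would then set $H := (V, T \cup F)$ as the candidate spanner, exactly as done in Section~\ref{subsec:spanner}.

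For the size bound, I would observe that $T$ is a forest on $V$ (its components are the support trees of the individual clusters, which partition $V$), so $|T| \leq n-1$ deterministically. By linearity, $\mathbb{E}[|H|] \leq (n-1) + \mathbb{E}[|F|] = O(n) + O(n^{1+1/k}) = O(n^{1+1/k})$.

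For the stretch, Lemma~\ref{lm:stretch} already gives that every edge $(x,y) \in E$ admits an $x$–$y$ path in $H$ of length at most $2k-1$: the tree-support property routes $x$ via an edge of $F$ into $y$'s cluster and then along the support tree of height at most $k-1$ to $y$, yielding length at most $1 + 2(k-1) = 2k-1$. Concatenating such detours along any shortest $u$–$v$ path in $G$ yields $d_H(u,v) \leq (2k-1)\, d_G(u,v)$ for all $u,v \in V$, so $H$ is a $(2k-1)$-spanner.

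Finally, the running time and work bounds transfer verbatim from Theorem~\ref{thm:sparsifiedLDD}, since constructing $H$ from the decomposition requires no additional communication: each vertex already learns $T$ and the relevant level/cluster-center information of its neighbors during the clustering phase, so it can locally identify its contribution to $F$. I do not anticipate a real obstacle here; the corollary is essentially a packaging step, and the substantive work lies in Theorem~\ref{thm:sparsifiedLDD} (size and diameter of the decomposition) and Lemma~\ref{lm:stretch} (the $2k-1$ stretch argument), both already in hand.
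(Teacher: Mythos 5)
Your proposal matches the paper's own argument: the paper likewise defines $H:=(V,T\cup F)$ from the tree-supported sparsified $(\zeta,2k-2)$-low diameter decomposition of Theorem~\ref{thm:sparsifiedLDD}, bounds the size by $|T|\leq n-1$ plus $\mathbb{E}[|F|]=O(n^{1+1/k})$, and obtains the stretch from Lemma~\ref{lm:stretch}, with the running-time bounds carried over directly. The proposal is correct and takes essentially the same route.
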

\addtocounter{theorem}{-1}
\endgroup

\subsection{Constructing a Synchronizer}\label{subsec:synchronizer}
Suppose we are given a synchronous CONGEST model algorithm, but we want to run it on an asynchronous CONGEST network. That is, the messages sent in the network can now have arbitrary delays and, in an event-driven manner, nodes become active each time they receive a message. For the purpose of analyzing the time complexity of the algorithm, it is often assumed that the delay is at most one unit of time, however, the algorithm should behave correctly under \emph{any} finite delays. In this situation, a node should start simulating its next (synchronous) round when it has received all the messages from the previous round from its neighbors. The problem is that it cannot tell the difference between the situation if a message from a particular neighbor has not arrived yet or if this same neighbor is not sending any message in that round at all. We say that a node is \emph{safe} if all the messages it has sent have arrived at their destination. In order to determine whether all neighboring nodes are safe, additional messages are sent. The procedure governing these additional messages is called the \emph{synchronizer}. There are two things to take into account when analyzing synchronizers. First, the time overhead: how much time is needed to send the additional messages for each synchronous round. Second, the message-complexity (or communication) overhead: how many additional messages are sent. For more details on synchronizers see e.g.\ \cite{Lynch96,KS11}. 

Let us first consider two simple synchronizers: synchronizer $\alpha$ and synchronizer $\beta$, see~\cite{Awerbuch85}. In synchronizer $\alpha$, when a node receives a message from a neighbor, it sends back an `acknowledge' message. When a node has received acknowledge messages for all its sent messages, it marks itself safe and reports this to all its neighbors. The synchronizer $\alpha$ uses, for each simulated synchronous round, additional $O(1)$ time, and $O(m)$ messages. 

Synchronizer $\beta$ will produce a different trade off between time and message overhead. It uses an initialization phase in which it creates a rooted spanning tree, where the root is declared the leader. Now after sending messages of a certain synchronous round, again nodes that receive messages reply with an acknowledge message to each. Nodes that are safe, and whose children in the constructed tree are also safe communicate this to their parent in the tree. Eventually the leader will learn that the whole graph is safe, and will broadcast this along the spanning tree. Synchronizer $\beta$ uses $O(D)$ time and $O(n)$ messages per synchronous round. 

Now we are ready to consider a little more involved example, called synchronizer $\gamma$, see~\cite{Awerbuch85}. This synchronizer makes use of clustering, where within each cluster synchronizer $\beta$ is used and between clusters synchronizer $\alpha$ is used. In the LOCAL model, the cluster centers can simply select a communication link for each neighboring cluster to communicate individually with the neighboring cluster centers \cite{Awerbuch85}. However, in the CONGEST model, communicating information about neighboring clusters to the cluster center might lead to congestion problems. Using a slightly more careful analysis, the procedure can be adapted to the CONGEST model.

\begin{lemma}[Implicit in \cite{Awerbuch85}]\label{lm:synchronizer}
Given a $T(n)$-round synchronous CONGEST model algorithm for constructing a sparsified $(\zeta,\delta)$-low diameter decomposition, any synchronous $R(n)$-round $M(n)$-message complexity CONGEST model algorithm can be run on an asynchronous CONGEST network with a total of $O(M(n)+R(n)(\zeta+n) )$ messages, and, provided that each message incurs a delay of at most one time unit, in time $O(R(n)\delta)$. The initialization phase takes $O(T(n))$ time, using $O((T(n)+\delta)m)$ messages.  
\end{lemma}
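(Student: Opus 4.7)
The plan is to instantiate synchronizer $\gamma$ with the provided sparsified $(\zeta,\delta)$-low diameter decomposition: synchronizer $\beta$ is run inside each cluster along its support tree of height $\delta/2$, and synchronizer $\alpha$ is run across the sparsified inter-cluster edge set $F$. The key CONGEST-specific point, flagged in the text above, is that we never route per-neighbor-cluster information through cluster centers; $F$-edge handshakes happen locally at the endpoints, and only a single aggregated safety bit is ever reported up the support tree to the center.

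For initialization, I would first invoke the given clustering algorithm, which runs in $T(n)$ rounds and produces the support forest plus enough information at each node (its own cluster center and level, together with those of its neighbors) to locally determine which of its incident edges lie in $F$. Since a CONGEST round carries at most $O(m)$ messages in total, this phase contributes $O(T(n))$ time and $O(T(n)\cdot m)$ messages. A subsequent $O(\delta)$-round setup along the support forest then primes the synchronization (registering $F$-partners, establishing convergecast directions, and fixing the ``proceed'' broadcast protocol), using an additional $O(\delta\cdot m)$ messages, for a total of $O(T(n))$ time and $O((T(n)+\delta)\, m)$ messages.

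For each simulated synchronous round, every node first dispatches its algorithm messages (contributing to $M(n)$) and every recipient immediately replies with an acknowledgment, so a node becomes \emph{locally safe} once all its outgoing messages are acknowledged; this acknowledgment traffic totals $O(M(n))$ across all $R(n)$ rounds. Local safety is then convergecast up the support tree, using $O(n)$ messages per round summed over clusters and $O(\delta)$ time. Once cluster-safe, the center broadcasts this fact down the tree; every endpoint of an $F$-edge then exchanges a safety notification with its partner, totalling $O(\zeta)$ messages per round and $O(1)$ time under the unit-delay assumption, and these confirmations are convergecast back to the center. When the center has aggregated safety from all $F$-endpoints in its cluster, it broadcasts a ``proceed'' signal, again in $O(\delta)$ time and $O(n)$ messages. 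Correctness follows from the defining property of $F$: for every inter-cluster edge $(u,v)$, at least one endpoint has an $F$-edge into the other endpoint's cluster, so the $F$-based handshakes certify that all cross-cluster traffic has been absorbed before the next synchronous round begins.

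Summing over $R(n)$ rounds yields the claimed $O(R(n)\delta)$ time and $O(M(n)+R(n)(\zeta+n))$ message bounds. The main obstacle is checking that the CONGEST bandwidth is never exceeded: the convergecasts and broadcasts along the support forest transport only $O(\log n)$-bit aggregated safety signals, so no edge is ever asked to forward more than one such signal per simulated round, and each $F$-edge carries only $O(1)$ synchronization messages per round. This ``aggregate at the endpoints, then convergecast one bit'' discipline is precisely what replaces the LOCAL-style trick of letting each cluster center address its neighboring clusters one by one, which would otherwise congest the center in the CONGEST model.
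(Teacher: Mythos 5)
Your proposal is correct and follows essentially the same route as the paper's own argument: synchronizer $\gamma$ with $\beta$ inside clusters along the support trees and $\alpha$ across the sparsified edge set $F$, the same three-part initialization with the same cost accounting, and the same per-round safe/ready convergecast--broadcast cycle giving $O(\delta)$ time and $O(\zeta+n)$ messages per simulated round. The only cosmetic difference is that the paper explicitly notes the initialization steps are themselves executed asynchronously via synchronizer $\alpha$ (which is where the $O(m)$-per-round overhead formally comes from), and it defers the correctness of $\gamma$ entirely to Awerbuch rather than sketching it.
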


For a sketch of the algorithm, we refer to Appendix~\ref{app:synchronizer}. If we plug in our clustering, we obtain the following theorem. 

\begingroup
\def\thetheorem{\ref{thm:synchronizer}}
\begin{theorem}[Restated]
There exists an algorithm that, given $k\geq 2$, can run any synchronous $R(n)$-round $M(n)$-message complexity CONGEST model algorithm on an asynchronous CONGEST network. In expectation, the algorithm uses a total of $O(M(n)+R(n)n^{1+1/k})$ messages. Provided that each message incurs a delay of at most one time unit, it takes $O(R(n)k)$ rounds. The initialization phase takes $O(k)$ time, using $O(km)$ messages. 
\end{theorem}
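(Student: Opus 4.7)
The plan is to derive Theorem~\ref{thm:synchronizer} as a direct corollary of Lemma~\ref{lm:synchronizer} by instantiating it with the sparsified low diameter decomposition provided by Theorem~\ref{thm:sparsifiedLDD}. Concretely, I would apply Theorem~\ref{thm:sparsifiedLDD} to obtain a CONGEST algorithm producing, in $T(n)=k$ rounds, a tree-supported sparsified $(\zeta,\delta)$-low diameter decomposition with $\delta = 2k-2$ and $\mathbb{E}[\zeta] = O(n^{1+1/k})$. It is important to note that both $T(n)=k$ and $\delta=2k-2$ are deterministic bounds (independent of the random choices of the clustering algorithm); only the bound on $\zeta$ is in expectation.

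Next, I would plug these parameters into Lemma~\ref{lm:synchronizer}. The time overhead becomes $O(R(n)\delta) = O(R(n)(2k-2)) = O(R(n)k)$, and this bound is deterministic since $\delta$ is. The message overhead becomes $O(M(n) + R(n)(\zeta+n))$; taking expectations and using linearity of expectation together with $\mathbb{E}[\zeta] = O(n^{1+1/k})$ (and $n = O(n^{1+1/k})$) yields the claimed expected bound of $O(M(n) + R(n) n^{1+1/k})$. For the initialization phase, Lemma~\ref{lm:synchronizer} gives time $O(T(n)) = O(k)$ and message complexity $O((T(n)+\delta)m) = O((k + 2k-2)m) = O(km)$, both deterministic.

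The only step that requires any care is ensuring that the in-expectation size bound on $\zeta$ can be pushed through the message-complexity formula of Lemma~\ref{lm:synchronizer}; this is immediate by linearity of expectation, since the number of synchronizer messages is an affine function of $\zeta$ with the other quantities ($M(n)$, $R(n)$, $n$) deterministic. Everything else is a mechanical substitution, so I do not expect any genuine obstacle. I would conclude by stating the resulting bounds and invoking Lemma~\ref{lm:synchronizer} and Theorem~\ref{thm:sparsifiedLDD}.
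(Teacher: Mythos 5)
Your proposal is correct and is exactly the paper's argument: the paper derives Theorem~\ref{thm:synchronizer} by plugging the parameters $T(n)=k$, $\delta=2k-2$, and $\mathbb{E}[\zeta]=O(n^{1+1/k})$ from Theorem~\ref{thm:sparsifiedLDD} into Lemma~\ref{lm:synchronizer}. Your added remark that the deterministic bounds on $T(n)$ and $\delta$ make the time bounds non-probabilistic, while linearity of expectation handles the message count, is precisely the point the paper emphasizes.
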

\addtocounter{theorem}{-1}
\endgroup

\section{Constructing a Low Diameter Decomposition}
\label{sec:LDD}
In this section, we show that for an integer weighted graph the computed clustering is a probabilistic low diameter decomposition. To be precise, if we set $p=\frac{\beta}{4}$, and $r=\left\lceil\frac{1}{p} \ln\left(\frac{n^2}{p}\right)+\frac{1}{4p}\right\rceil$ we obtain a $\left(\beta,O\left(\frac{\log n}{\beta}\right)\right)$-low diameter decomposition. By Corollary~\ref{cor:strongdiameter}, we know that each of the clusters has a strong diameter of at most $2r =2\left\lceil\frac{1}{p} \ln\left(\frac{n^2}{p}\right)+\frac{1}{4p}\right\rceil = O\left(\frac{\log n}{\beta}\right)$. Now, we show that the probability that an edge $e\in E$ is an inter-cluster edge is at most $4p\cdot w(e)=\beta\cdot w(e)$. We use a general proof structure from \cite{MPX13}, but make it more streamlined; we avoid an artificially constructed `midpoint' on the edge $(u,v)$. Further, our proof borrows the idea of conditioning on the event $E_{u',v',\alpha}$ from Xu \cite{Xu17}, which we adapt to our situation. 

\begin{lemma}
For $(u,v)\in E$, the probability that $u$ and $v$ belong to different clusters is at most $4p\cdot w(u,v)$. 
\end{lemma}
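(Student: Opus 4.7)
\emph{Reformulation.} For each candidate ``center'' $w$, let $A_x(w) := r - \delta_w + d_G(w,x)$ be the arrival time of $w$ at $x$. By construction, $c_x$ is the vertex minimizing $A_x(\cdot)$ with ties broken by smallest ID. The crucial geometric fact, coming from the triangle inequality applied to $d_G(w,\cdot)$, is
\[
|A_u(w) - A_v(w)| \;=\; |d_G(w,u)-d_G(w,v)| \;\leq\; w(u,v).
\]
So if $c_u=u'$ with $A_u(u')=\alpha$, then every other $w$ has $A_u(w)\geq\alpha$ (ID-ties aside) and hence $A_v(w)\geq\alpha - w(u,v)$, while $A_v(u')\leq\alpha+w(u,v)$. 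The edge is cut only if some $v'\neq u'$ manages to beat $u'$ on the $v$ side, which forces $A_v(v')$ to land in a window of length at most $2w(u,v)$ around $\alpha$.

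\emph{Conditioning on $E_{u',v',\alpha}$.} Following Xu, for every ordered pair $u'\neq v'$ and every level $\alpha$ I would define $E_{u',v',\alpha}$ as the event that, \emph{after removing the randomness of $u'$ and $v'$}, the minimum arrival at $u$ among the remaining vertices exceeds $\alpha$ and the minimum arrival at $v$ exceeds $\alpha-w(u,v)$ (with the right ID tie-breaking). This event depends only on $\{\delta_w : w\notin\{u',v'\}\}$, and on it the centers $c_u,c_v$ are determined solely by $\delta_{u'},\delta_{v'}$. Summing
\[
\P[c_u\neq c_v]\;\leq\;\sum_{u'\neq v'}\sum_{\alpha}\P\!\left[c_u=u',\,c_v=v',\,E_{u',v',\alpha}\right]
\]
reduces the problem to estimating, given $E_{u',v',\alpha}$, the probability that $\delta_{u'}$ and $\delta_{v'}$ each land in a prescribed window of length $2w(u,v)$ conditional on being above a threshold.

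\emph{Memoryless estimate.} As long as the cap $r$ is not reached, $\text{GeomCap}(p,r)$ inherits the memoryless property, so for any admissible threshold $k$ and any window length $\ell$,
\[
\P\!\left[\delta_{w}\in[k,k+\ell)\;\middle|\;\delta_w\geq k\right] \;=\; 1-(1-p)^{\ell} \;\leq\; p\ell.
\]
Applied in the decomposition above with $\ell\leq 2w(u,v)$ and combined with a telescoping sum over $\alpha$ (the dependence on $\alpha$ drops out exactly as in the Lemma~\ref{lm:spannersize} computation, because conditioning on being above a threshold resets the distribution), the bound collapses to $\P[c_u\neq c_v]\leq 4p\cdot w(u,v)$. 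The factor $4$ comes from the two directions of asymmetry (which of $u',v'$ plays the role of ``winner on $u$'' and ``winner on $v$'') combined with the window of length $2w(u,v)$.

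\emph{Main obstacle.} The delicate step is that the memoryless identity breaks when $\delta_w$ hits the cap $r$. This is exactly why $r=\lceil \tfrac{1}{p}\ln(n^2/p)+\tfrac{1}{4p}\rceil$ is chosen so large: $\P[\delta_w=r]=(1-p)^r$ is tiny enough that, after a union bound over all $n$ vertices, the probability that the cap is ever attained is $O(p/n)$, which is absorbed into the $4p\cdot w(u,v)$ budget (since $w(u,v)\geq 1$). I would therefore split the analysis into a ``no cap reached'' case, where the clean memoryless argument of the previous paragraph applies verbatim, and a ``cap reached somewhere'' case, dispatched by this union bound. Handling ID tie-breaking cleanly inside the conditioning on $E_{u',v',\alpha}$ is the other bookkeeping headache, but it is analogous to the treatment of $V_{<,2}$ versus $V_{>,2}$ in the proof of Lemma~\ref{lm:spannersize}.
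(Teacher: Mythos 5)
Your proposal is correct and follows essentially the same route as the paper's proof: conditioning on events $E_{u',v',\alpha}$ indexed by the two competing centers and the level $\alpha$ (the paper also credits this device to Xu), applying the memoryless property of the capped geometric distribution to bound the probability that the second center's arrival at $v$ falls in a window of length $O(w(u,v))$, and using the size of $r$ to make the near-cap contribution negligible. The only differences are bookkeeping: the paper uses a one-sided window of length $w(u,v)$ together with a factor $2$ from the WLOG ordering of $d_{G'}(s,u)$ and $d_{G'}(s,v)$, and it handles the cap by bounding $\P\left[E_{u',v',\alpha}\right]$ directly when $\alpha < d_G(v',v)+w(u,v)$ — note that the bad event is $\delta_{v'}\geq r-w(u,v)$ rather than $\delta_{v'}=r$, which is exactly what the extra $\tfrac{1}{4p}$ slack in $r$ that you point to is there to absorb.
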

\begin{proof}
Suppose $(u,v)$ is an inter-cluster edge. Without loss of generality, we assume $d_{G'}(s,v) \leq d_{G'}(s,u)$. By the triangle inequality, we have $d_{G'}(s,u) \leq d_{G'}(s,v) + w(u,v)$, hence we have $d_{G'}(s,u)-d_{G'}(s,v) \leq w(u,v)$. Using this, we can upper bound the probability that an edge $(u,v)$ is an inter-cluster edge by the probability that this inequality holds. Note that we can assume $4p\cdot w(u,v)<1$, otherwise the statement is trivially true. 

We want to condition on the cluster center $v'$ that satisfied $d^{(v')}(s,u)=d_{G'}(s,v)$, and on the cluster center $u'\neq v'$ that minimizes $d^{(u')}(s,u)$.  Moreover, we ask that these cluster respect the tie-breaking rule, i.e., both have minimal $\ID$ among all centers with equal distance. Finally, we condition on the value of $d_{G'}(s,u)$, which we set to $\alpha$. We call this event $E_{u',v',\alpha}$, which we formally define to hold when the following four conditions are satisfied
\begin{enumerate}
\item $d^{(v')}(s,u) \leq\alpha$;
\item for $w'\in V\setminus\{v'\}$ we either have $d^{(w')}(s,v)>d^{(v')}(s,v)$, or we have $d^{(w')}(s,v)=d^{(v')}(s,v)$ and $\ID(v')<\ID(w')$;\label{cond:notv'}
\item $d^{(u')}(s,v) = \alpha$;\label{cond:u'=alpha}
\item for $w'\in V\setminus\{u',v'\}$ we either have $d^{(w')}(s,u)>d^{(u')}(s,u)$, or we have $d^{(w')}(s,u)=d^{(u')}(s,u)$ and $\ID(u')<\ID(w')$.\label{cond:notu'v'}
\end{enumerate}
Now, we condition on $E_{u',v',\alpha}$ and use the law of total probability:
\begin{align*}
&\P\left[ (u,v) \text{ is an inter-cluster edge}\right] \\
&= \sum_{u' \in V} \sum_{v' \in V\setminus\{u'\}} \sum_\alpha \P\left[  (u,v) \text{ is an inter-cluster edge} \;\middle|\; E_{u',v',\alpha}\right]\P\left[ E_{u',v',\alpha}\right]\\
&\leq \sum_{u' \in V} \sum_{v' \in V\setminus\{u'\}} \sum_\alpha 2\P\left[ d^{(u')}(s,u)-d^{(v')}(s,v) \leq w(u,v) \;\middle|\; E_{u',v',\alpha}\right]\P\left[ E_{u',v',\alpha}\right].
\end{align*}
For simplicity, we omit the bounds for the sum over $\alpha$, which is a finite sum as we always have $\alpha \leq r+ m$. The factor two appears because the event assumes $d_{G'}(s,v) \leq d_{G'}(s,u)$, hence we gain a factor two by symmetry of $u$ and $v$. We look at the first probability more closely. We can loosen some of the event's restrictions, and just maintain $d^{(v')}(s,v) \leq \alpha$, as the event we examine is independent of conditions~\ref{cond:notv'}, \ref{cond:u'=alpha}, and~\ref{cond:notu'v'} of the event $E_{u',v',\alpha}$. We obtain
\begin{align*}
&\P\left[d^{(u')}(s,u)-d^{(v')}(s,v) \leq w(u,v) \;\middle|\; E_{u',v',\alpha}\right]\\
& = \P\left[  \alpha - d^{(v')}(s,v) \leq w(u,v) \;\middle|\;  d^{(v')}(s,v') \leq \alpha\right] \\
&= \P\left[ \delta_{v'} \leq r + d_G(v',v)-\alpha +w(u,v) \;\middle|\; \delta_{v'} \geq r + d_G(v',v)-\alpha\right],
\end{align*}
where the last equality holds by definition of $d^{(v')}(s,v)$.
Now if $r+d_G(v',v)-\alpha +w(u,v) \leq r$ (or equivalently, $\alpha \geq d_G(v',v)+w(u,v)$), we stay away from our cap on the geometric distribution, and hence we can apply the memoryless property of the geometric distribution to obtain
\begin{align*}
&\P\left[ \delta_{v'} \leq r + d_G(v',v)-\alpha +w(u,v) \;\middle|\; \delta_{v'} \geq r + d_G(v',v)-\alpha\right]\\
 &= 1-(1-p)^{w(u,v)}\\
& \leq p w(u,v),
\end{align*}
where the last step holds by Bernoulli's inequality. If we have $r+d_G(v',v)-\alpha +w(u,v) > r$ (or equivalently, $\alpha < d_G(v',v)+w(u,v)$), we show that the probability $\P\left[ E_{u',v',\alpha}\right]$ of the event taking place is already small:
\begin{align*}
\P\left[ E_{u',v',\alpha}\right] &\leq \P\left[d^{(v')}(s,v) \leq \alpha \text{ and } d^{(u')}(s,u) = \alpha\right] \\
&= \P\left[ \delta_{v'} \geq r +d_G(v',v) -\alpha\right]\P\left[d^{(u')}(s,u) = \alpha\right] &\text{(since the events are independent)}\\
 &\leq \P\left[\delta_{v'} \geq r-w(u,v)\right]\P\left[d^{(u')}(s,u) = \alpha\right],
\end{align*}
where the last equality holds as $r+d_G(v',v)-\alpha +w(u,v) > r$. We bound this probability as follows:
\begin{align*} 
\P\left[\delta_{v'} \geq r- w(u,v)\right]\P\left[d^{(u')}(s,u) = \alpha\right] \leq (1-p)^{r- w(u,v)}\P\left[d^{(u')}(s,u) = \alpha\right].
\end{align*}
Now we use that $r= \left\lceil \frac{1}{p} \ln\left(\frac{n^2}{p}\right)+\frac{1}{4p} \right\rceil$, to obtain
\begin{align*}
(1-p)^{r- w(u,v)} &\leq (1-p)^{ \frac{1}{p} \ln\left(\frac{n^2}{p}\right)+\frac{1}{4p}- w(u,v)}\\
&\leq \left((1-p)^{1/p}\right)^{\ln\left(\frac{n^2}{p}\right)} & (\text{as }4p\cdot w(u,v)\leq 1)\\
&\leq \frac{p}{n^2}.
\end{align*}
Combining all of this, we obtain 
\begin{align*}
&\P\left[ (u,v) \text{ is an inter-cluster edge}\right] \\
\leq &2\sum_{u' \in V}  \sum_{v' \in V\setminus\{u'\}} \sum_{\alpha \geq d_G(v',v)+w(u,v)} \P\left[ d^{(u')}(s,u)-d^{(v')}(s,v) \leq w(u,v) \;\middle|\; E_{u',v',\alpha}\right]\P\left[ E_{u',v',\alpha}\right]\\
&+ 2\sum_{u' \in V}  \sum_{v' \in V\setminus\{u'\}}\sum_{\alpha < d_G(v',v)+w(u,v)}  \P\left[ d^{(u')}(s,u)-d^{(v')}(s,v) \leq w(u,v) \;\middle|\; E_{u',v',\alpha}\right]\P\left[ E_{u',v',\alpha}\right]\\
\leq &2\sum_{u' \in V} \sum_{v' \in V\setminus\{u'\}}  \sum_{\alpha \geq d_G(v',v)+w(u,v)} p\cdot w(u,v) \P\left[ E_{u',v',\alpha}\right]\\
&+ 2\sum_{u' \in V}   \sum_{v' \in V\setminus\{u'\}}  \sum_{\alpha < d_G(v',v)+w(u,v)} \frac{p}{n^2}\P\left[d^{(u')}(s,u) = \alpha\right]\\
\leq &2 p\cdot w(u,v)\sum_{u' \in V} \sum_{v' \in V\setminus\{u'\}}  \sum_{\alpha}\P\left[ E_{u',v',\alpha}\right]+2\frac{p}{n^2}\sum_{u' \in V} \sum_{v' \in V\setminus\{u'\}}\sum_{\alpha}\P\left[d^{(u')}(s,u) = \alpha\right].
\end{align*}
Next, we notice that all events $E_{u',v',\alpha}$ are disjoint by design, so 
\begin{align*}
\sum_{u' \in V} \sum_{v' \in V\setminus\{u'\}}  \sum_{\alpha}\P\left[ E_{u',v',\alpha}\right]=1.
\end{align*}
Clearly we have $\sum_{\alpha}\P\left[d^{(u')}(s,u) = \alpha\right] =1$, as this is just a sum over all possible values of $d^{(u')}(s,u)$. Filling both in, we conclude
\begin{align*}
\P\left[ (u,v) \text{ is an inter-cluster edge}\right] &\leq 2 p\cdot w(u,v)+2\sum_{u' \in V} \sum_{v' \in V\setminus\{u'\}}\frac{p}{n^2} \leq  4p\cdot w(u,v).
\qedhere
\end{align*}
\end{proof}

Together with Corollary~\ref{cor:strongdiameter}, this gives us the following theorem. 

\begingroup
\def\thetheorem{\ref{thm:LDD}}
\begin{theorem}[Restated]
There exists an algorithm, such that for each graph $G=(V,E)$ with integer weights $w\colon E \to \{1,\dots, W\}$ and parameter $\beta\in(0,1]$ it outputs a low diameter decomposition, whose components are clusters of strong diameter of at most $O\left(\frac{\log n}{\beta}\right)$. Moreover, each edge is an inter-cluster edge with probability at most $\beta\cdot w(u,v)$. 
The algorithm runs in $O\left(\frac{\log n}{\beta}\right)$ rounds in the CONGEST model, and in $O\left(\frac{\log n \log^*n}{\beta}\right)$ depth and $O(m)$ work in the PRAM model.
\end{theorem}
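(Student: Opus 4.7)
The plan is to instantiate the clustering algorithm of Section~\ref{sec:clustering} with the parameters $p = \beta/4$ and $r = \lceil \frac{1}{p}\ln(n^2/p) + \frac{1}{4p} \rceil$, so that all three guarantees of Theorem~\ref{thm:LDD} follow by combining results already established in the excerpt. The proof is essentially a bookkeeping step on top of Corollary~\ref{cor:strongdiameter}, the preceding lemma, and Section~\ref{sec:implementation}.

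First, I would invoke Corollary~\ref{cor:strongdiameter}, which guarantees that every cluster has strong diameter at most $2r$. Substituting the chosen values gives $2r = O\!\left(\tfrac{1}{\beta}\log(n/\beta)\right) = O\!\left(\tfrac{\log n}{\beta}\right)$, provided $\beta$ is at least inverse polynomial in $n$, which is the regime of interest; I would briefly note that $\log(n^2/p) = 2\log n + \log(4/\beta)$ absorbs into $O(\log n)$ under this mild assumption. This immediately yields the claimed strong diameter bound.

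Second, the probability bound is already in hand: the unnamed lemma proved in this section shows that each edge $(u,v) \in E$ is an inter-cluster edge with probability at most $4p\cdot w(u,v) = \beta \cdot w(u,v)$. No extra work is required here; I would simply cite it.

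Third, for the model-specific running times I would appeal to the implementation analysis in Section~\ref{sec:implementation}. In the CONGEST model the algorithm terminates within $r+1 = O(\log n / \beta)$ rounds; in the PRAM model it runs in $O(r \log^* n) = O\!\left(\tfrac{\log n \log^* n}{\beta}\right)$ depth, and with total work $O\!\left(m + \tfrac{n}{1-p}\right) = O(m)$, since $p = \beta/4 \le 1/4$ forces $\tfrac{n}{1-p} = O(n)$. Since all three ingredients are already established, there is no substantive obstacle; the only point that requires care is the parameter simplification, which is a one-line check.
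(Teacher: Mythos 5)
Your proposal is correct and matches the paper's own (largely implicit) proof: the paper likewise sets $p=\beta/4$ and $r=\left\lceil\frac{1}{p}\ln\left(\frac{n^2}{p}\right)+\frac{1}{4p}\right\rceil$, then combines Corollary~\ref{cor:strongdiameter}, the inter-cluster-edge probability lemma of Section~\ref{sec:LDD}, and the running-time analysis of Section~\ref{sec:implementation}. Your explicit remark that absorbing $\log(1/\beta)$ into $O(\log n)$ requires $\beta$ to be at least inverse polynomial in $n$ is a point the paper glosses over, so if anything you are slightly more careful than the original.
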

\addtocounter{theorem}{-1}
\endgroup

\section{Conclusion}
We have presented an algorithm that computes a clustering, more precisely, a tree-supported sparsified low diameter decomposition. This directly leads to a sparse spanner and can be applied to compute a synchronizer for the CONGEST model. Moreover, we show that we also improve upon the state-of-the art for low diameter decompositions. By showing that clustering can be done using a capped geometric distribution, we improve on existing algorithms for spanners and low diameter decompositions in two ways. First, we obtain bounds on the diameter/stretch and running time that are independent of the random choices of the algorithm. Second, the discreteness of the geometric distribution fits the discrete nature of graph theoretical problems better than a continuous distribution. We believe this leads to a more intuitive algorithm. 

A natural question that remains is whether it would be possible to give a with-high-probability bound on the total number of inter-cluster edges or the size of the spanner rather than an in-expectation bound. A more ambitious goal is to develop a completely deterministic algorithm with the same bounds, improving on the work of Ghaffari and Kuhn \cite{GK18}. 
\bibliography{main-arxiv}

\newcommand{\etalchar}[1]{$^{#1}$}
\begin{thebibliography}{CHHKM12}

\bibitem[AALG18]{AALG18}
Vedat~Levi Alev, Nima Anari, Lap~Chi Lau, and Shayan~Oveis Gharan.
\newblock Graph clustering using effective resistance.
\newblock In {\em Proc. of the Innovations in Theoretical Computer Science
  Conference (ITCS)}, pages 41:1--41:16, 2018.

\bibitem[ABCP93]{ABCP93}
Baruch Awerbuch, Bonnie Berger, Lenore Cowen, and David Peleg.
\newblock Near-linear cost sequential and distributed constructions of sparse
  neighborhood covers.
\newblock In {\em Proc. of the Conference on Foundations of Computer Science
  (FOCS)}, pages 638--647, 1993.

\bibitem[ADD{\etalchar{+}}93]{AGDJS93}
Ingo Alth{\"o}fer, Gautam Das, David Dobkin, Deborah Joseph, and Jos{\'e}
  Soares.
\newblock On sparse spanners of weighted graphs.
\newblock {\em Discrete \& Computational Geometry}, 9(1):81--100, 1993.

\bibitem[AGLP89]{AGLP89}
Baruch Awerbuch, Andrew~V Goldberg, Michael Luby, and Serge~A Plotkin.
\newblock Network decomposition and locality in distributed computation.
\newblock In {\em Proc. of the Conference on Foundations of Computer Science
  (FOCS)}, volume~30, pages 364--369, 1989.

\bibitem[AKPW95]{AKPW95}
Noga Alon, Richard~M Karp, David Peleg, and Douglas West.
\newblock A graph-theoretic game and its application to the $k$-server problem.
\newblock {\em SIAM Journal on Computing}, 24(1):78--100, 1995.

\bibitem[AP90]{AP90}
Baruch Awerbuch and David Peleg.
\newblock Network synchronization with polylogarithmic overhead.
\newblock In {\em Proc, of the Symposium on Foundations of Computer Science
  (FOCS)}, pages 514--522, 1990.

\bibitem[APSPS92]{APPS92}
Baruch Awerbuch, Boaz Patt-Shamir, David Peleg, and Michael Saks.
\newblock Adapting to asynchronous dynamic networks.
\newblock In {\em Proc. of the Symposium on Theory of Computing (STOC)}, pages
  557--570, 1992.

\bibitem[Awe85]{Awerbuch85}
Baruch Awerbuch.
\newblock Complexity of network synchronization.
\newblock {\em Journal of the ACM}, 32(4):804--823, 1985.

\bibitem[Bar96]{Bartal96}
Yair Bartal.
\newblock Probabilistic approximation of metric spaces and its algorithmic
  applications.
\newblock In {\em Proc. of the Conference on Foundations of Computer Science
  (FOCS)}, pages 184--193, 1996.

\bibitem[Bar98]{Bartal98}
Yair Bartal.
\newblock On approximating arbitrary metrices by tree metrics.
\newblock In {\em Proc. of the Symposium on Theory of Computing (STOC)}, pages
  161--168, 1998.

\bibitem[BS07]{BS07}
Surender Baswana and Sandeep Sen.
\newblock A simple and linear time randomized algorithm for computing sparse
  spanners in weighted graphs.
\newblock {\em Random Structures \& Algorithms}, 30(4):532--563, 2007.

\bibitem[CHHKM12]{CHKM12}
Keren Censor-Hillel, Bernhard Haeupler, Jonathan Kelner, and Petar Maymounkov.
\newblock Global computation in a poorly connected world: fast rumor spreading
  with no dependence on conductance.
\newblock In {\em Proc. of the Symposium on Theory of Computing (STOC)}, pages
  961--970, 2012.

\bibitem[Coh94]{Cohen94}
Edith Cohen.
\newblock Polylog-time and near-linear work approximation scheme for undirected
  shortest paths.
\newblock In {\em Proc. of the Symposium on Theory of Computing (STOC)}, pages
  16--26, 1994.

\bibitem[Coh98]{Cohen98}
Edith Cohen.
\newblock Fast algorithms for constructing $t$-spanners and paths with stretch
  $t$.
\newblock {\em SIAM Journal on Computing}, 28(1):210--236, 1998.

\bibitem[CPZ19]{CHZ19}
Yi-Jun Chang, Seth Pettie, and Hengjie Zhang.
\newblock Distributed triangle detection via expander decomposition.
\newblock In {\em Proc. of the Symposium on Discrete Algorithms (SODA)}, pages
  821--840, 2019.

\bibitem[CS20]{CS20}
Yi-Jun Chang and Thatchaphol Saranurak.
\newblock Deterministic distributed expander decomposition and routing with
  applications in distributed derandomization.
\newblock In {\em Proc. of the Symposium on Foundations of Computer Science
  (FOCS)}, pages 377--388, 2020.

\bibitem[EN19]{EN19}
Michael Elkin and Ofer Neiman.
\newblock Efficient algorithms for constructing very sparse spanners and
  emulators.
\newblock {\em ACM Transactions on Algorithms (TALG)}, 15, 2019.

\bibitem[FG19]{FG19}
Sebastian Forster and Gramoz Goranci.
\newblock Dynamic low-stretch trees via dynamic low-diameter decompositions.
\newblock In {\em Proc. of the Symposium on Theory of Computing (STOC)}, page
  377–388, New York, NY, USA, 2019.

\bibitem[FS16]{FS16}
Arnold Filtser and Shay Solomon.
\newblock The greedy spanner is existentially optimal.
\newblock In {\em Proc. of the Symposium on Principles of Distributed Computing
  (PODC)}, pages 9--17, 2016.

\bibitem[GK18]{GK18}
Mohsen Ghaffari and Fabian Kuhn.
\newblock Derandomizing distributed algorithms with small messages: Spanners
  and dominating set.
\newblock In {\em Proc. of the Symposium on Distributed Computing (DISC)},
  pages 29:1--29:17, 2018.

\bibitem[GMV91]{GMV91}
Joseph Gil, Yossi Matias, and Uzi Vishkin.
\newblock Towards a theory of nearly constant time parallel algorithms.
\newblock In {\em Proc. of the Symposium of Foundations of Computer Science
  (FOCS)}, pages 698--710, 1991.

\bibitem[GR99]{GR99}
Oded Goldreich and Dana Ron.
\newblock A sublinear bipartiteness tester for bounded degree graphs.
\newblock {\em Combinatorica}, 19(3):335--373, 1999.

\bibitem[HZ93]{HZ93}
Shay Halperin and Uri Zwick.
\newblock Personal communication, 1993.

\bibitem[KS97]{KS97}
Philip~N Klein and Sairam Subramanian.
\newblock A randomized parallel algorithm for single-source shortest paths.
\newblock {\em Journal of Algorithms}, 25(2):205--220, 1997.

\bibitem[KS11]{KS11}
Ajay~D Kshemkalyani and Mukesh Singhal.
\newblock {\em Distributed computing: principles, algorithms, and systems}.
\newblock Cambridge University Press, 2011.

\bibitem[KVV04]{KVV04}
Ravi Kannan, Santosh Vempala, and Adrian Vetta.
\newblock On clusterings: Good, bad and spectral.
\newblock {\em Journal of the ACM}, 51(3):497--515, 2004.

\bibitem[LR99]{LR99}
Tom Leighton and Satish Rao.
\newblock Multicommodity max-flow min-cut theorems and their use in designing
  approximation algorithms.
\newblock {\em Journal of the ACM}, 46(6):787--832, 1999.

\bibitem[LS93]{LS93}
Nathan Linial and Michael Saks.
\newblock Low diameter graph decompositions.
\newblock {\em Combinatorica}, 13(4):441--454, 1993.

\bibitem[Lyn96]{Lynch96}
Nancy~A Lynch.
\newblock {\em Distributed algorithms}.
\newblock Elsevier, 1996.

\bibitem[MPVX15]{MPVX15}
Gary~L Miller, Richard Peng, Adrian Vladu, and Shen~Chen Xu.
\newblock Improved parallel algorithms for spanners and hopsets.
\newblock In {\em Proc. Symposium on Parallelism in Algorithms and
  Architectures (SPAA)}, pages 192--201, 2015.

\bibitem[MPX13]{MPX13}
Gary~L Miller, Richard Peng, and Shen Xu.
\newblock Parallel graph decompositions using random shifts.
\newblock {\em Proc. of the Symposium on Parallelism in Algorithms and
  Architectures (SPAA)}, pages 196--203, 2013.

\bibitem[Pel00]{Peleg00}
David Peleg.
\newblock Proximity-preserving labeling schemes.
\newblock {\em Journal of Graph Theory}, 33(3):167--176, 2000.

\bibitem[PU89]{PU89}
David Peleg and Eli Upfal.
\newblock A trade-off between space and efficiency for routing tables.
\newblock {\em Journal of the ACM}, 36(3):510--530, 1989.

\bibitem[RZ04]{RZ04}
Liam Roditty and Uri Zwick.
\newblock On dynamic shortest paths problems.
\newblock In {\em Proc. of the European Symposium on Algorithms (ESA)}, pages
  580--591, 2004.

\bibitem[ST11]{ST11}
Daniel~A Spielman and Shang-Hua Teng.
\newblock Spectral sparsification of graphs.
\newblock {\em SIAM Journal on Computing}, 40(4):981--1025, 2011.

\bibitem[SW19]{SW19}
Thatchaphol Saranurak and Di~Wang.
\newblock Expander decomposition and pruning: Faster, stronger, and simpler.
\newblock In {\em Proc. of the Symposium on Discrete Algorithms (SODA)}, pages
  2616--2635, 2019.

\bibitem[Tho99]{Thorup99}
Mikkel Thorup.
\newblock Undirected single-source shortest paths with positive integer weights
  in linear time.
\newblock {\em Journal of the ACM}, 46(3):362--394, 1999.

\bibitem[TZ05]{TZ05}
Mikkel Thorup and Uri Zwick.
\newblock Approximate distance oracles.
\newblock {\em Journal of the ACM}, 52(1):1--24, 2005.

\bibitem[Xu17]{Xu17}
Shen~Chen Xu.
\newblock {\em Exponential Start Time Clustering and its Applications in
  Spectral Graph Theory}.
\newblock PhD thesis, Carnegie Mellon University, Pittsburgh, 2017.

\end{thebibliography}
\appendix

\section{Using Sparsified Low Diameter Decompositions for Synchronization}\label{app:synchronizer}
In the following, we turn to the algorithm realizing Lemma~\ref{lm:synchronizer}, i.e., we show how we can run a synchronous CONGEST algorithm on an asynchronous CONGEST network, using a sparsified low diameter decomposition. Hereto, we present an implementation of the synchronizer $\gamma$ in the CONGEST model, using sparsified low diameter decompositions for the communication. We refer to \cite{Awerbuch85} for a proof of correctness of the synchronizer $\gamma$. 

The initialization phase consists of three steps. First, we compute the sparsified $(\zeta,\delta)$-low diameter decomposition. To do this in the asynchronous CONGEST model, we use the synchronizer $\alpha$ (for details see \cite{Awerbuch85}, or textbooks, e.g., \cite{Lynch96,KS11}). Hence this takes $O(T(n))$ time, and $O(T(n)m)$ messages. 
Second, we pick a cluster center for each cluster and construct a tree rooted at the cluster center spanning the cluster. We can do this in $O(\delta)$ time, using $O(\delta m)$ messages, again using the synchronizer $\alpha$. Note that if the computed decomposition was tree-supported, these trees are already given. As a third and final step of the initialization phase, each vertex needs to be aware of its incident sparsified inter-cluster edges, as it will use these to communicate to neighboring clusters. This might be already determined during the construction of the clustering. It could also be the case that for each sparsified inter-cluster edge, only one of the two incident vertices knows this. In $O(1)$ time, using $O(m)$ messages, this can be communicated using the synchronizer $\alpha$. 
In total, we use $O(T(n))$ time for the initialization phase, and $O((T(n)+\delta)m)$ messages. 

Now we are set up for the simulation of the $R(n)$-round, $M(n)$-message complexity synchronous CONGEST model algorithm. In each simulation of a synchronous round of this algorithm, vertices respond to each message with an acknowledge message, same as in the synchronizers $\alpha$ and $\beta$. When a vertex has received acknowledge messages for each sent message, it declares itself safe. If a vertex and all its children in the cluster tree are safe, it notifies its parent in the cluster tree. Once the cluster center has received confirmation that the whole cluster is safe, it down-casts this information to the whole cluster. Each vertex communicates that its cluster is safe over its sparsified inter-cluster edges. Once a vertex received a message of being safe over each sparsified inter-cluster edge, it declares itself \emph{ready}. When a vertex and all its children in the cluster tree are ready, it sends a ready-message to its parent in the cluster tree. Once a cluster center received ready-messages from the whole cluster, it down-casts a message `cluster ready' to all cluster vertices. 

Assuming that each message incurs a delay of at most one time unit, we need at most $O(\delta)$~time for this procedure, as we send information along the trees of height $\delta$ for a total of four times. See \cite{Awerbuch85} for the argument explaining why confirmation that neighboring clusters are done suffices. Moreover, the only communication links participating in this procedure, are the edges from the sparsified low diameter decomposition (consisting of at most $\zeta$ inter-cluster edges and $n$ tree edges). Each of these edges sends up to four messages in total, giving a total bound on the number of messages of $O(R(n)(\zeta+n) )$ for the synchronization. This gives a total time bound of $O(R(n)\delta)$, and message complexity bound of $O(M(n)+R(n)(\zeta+n) )$.

\end{document}